\documentclass[journal]{IEEEtran}
\usepackage{graphicx,amssymb,amsmath}
\usepackage[noadjust]{cite}
\usepackage{setspace}               
\usepackage{stfloats}

\usepackage[normal]{threeparttable}
\usepackage{amsthm}
\usepackage{amsmath}
\usepackage{bbm}
\usepackage{flushend}
\usepackage{cases,subeqnarray}
\usepackage{bm,multirow,bigstrut}
\usepackage{textcomp}
\usepackage{latexsym,bm}
\usepackage{booktabs,changebar}
\usepackage{xcolor}
\usepackage{mathtools}
\usepackage{dsfont}
\usepackage{extarrows}
\usepackage{mathrsfs}
\usepackage{cite}
\usepackage{bm}
\usepackage{cleveref}
\usepackage{multicol}       
\usepackage{multirow}       
\usepackage{array}          
\usepackage{colortbl}
\usepackage{makecell}
\usepackage{xcolor}
\definecolor{crimson}{RGB}{192,0,0}         
\definecolor{navy}{RGB}{47,85,151}         

\makeatletter                               
\newif\if@restonecol
\makeatother

\makeatletter
\newif\if@restonecol
\makeatother

\usepackage[linesnumbered,ruled,vlined]{algorithm2e}
\usepackage{algpseudocode}
\usepackage{amsmath}
\renewcommand{\arraystretch}{1.5} %

\theoremstyle{plain}
\newtheorem{thm}{Theorem}
\newtheorem{lemm}{Lemma}

\newtheorem{coro}{Corollary}

\theoremstyle{plain}

\def\diag{\mathrm{diag}}

\def\Htran{\mbox{\tiny $\mathrm{H}$}}

\begin{document}

\title{Channel Map-Based Channel Estimation for Near-Field UM-MIMO with Movable Planar Arrays\vspace{-0.em}}

\author{
Shuaifei~Chen,~\IEEEmembership{Member,~IEEE}, Cheng-Xiang~Wang,~\IEEEmembership{Fellow,~IEEE}, Chen~Huang,~\IEEEmembership{Member,~IEEE}, \\Qianze~Yang, Xiping~Wu,~\IEEEmembership{Senior Member,~IEEE}, Yunfei~Chen, \IEEEmembership{Fellow,~IEEE}, \\and El-Hadi~M.~Aggoune,~\IEEEmembership{Life Senior Member, IEEE}
\vspace{-2em}
\thanks{This work was supported by the National Natural Science Foundation of China (NSFC) under Grants 62394290, 62394291, and 62401643, the Major Science and Technology Project of Jiangsu Province under Grant BG2025039, the Young Elite Scientists Sponsorship Program by China Association for Science and Technology, under Grant 2022QNRC001, the Research Fund of National Mobile Communications Research Laboratory, Southeast University, under Grant 2025A05, and the Promising Researcher Program, University of Tabu, Saudi Arabia, under Grant PRP 2025-01.
Part of this paper was accepted to be presented at the IEEE Global Communications Conference 2026 (GC 2026)~\cite{yang2026improving}. (Corresponding authors: Cheng-Xiang Wang and Chen Huang.)

S. Chen and C. Huang are with Purple Mountain Laboratories, Nanjing, 211111 China; and with the National Mobile Communications Research Laboratory, School of Information Science and Engineering, Southeast University, Nanjing 211189, China (e-mail: shuaifeichen@seu.edu.cn, huangchen@pmlabs.com.cn).

C.-X. Wang and X. Wu are with the National Mobile Communications Research Laboratory, School of Information Science and Engineering, Southeast University, Nanjing 211189, China; and with the Pervasive Communication Research Center, Purple Mountain Laboratories, Nanjing 211111, China (e-mail: chxwang@seu.edu.cn, xiping.wu@seu.edu.cn).

Q. Yang is with the National Mobile Communications Research Laboratory, School of Information Science and Engineering, Southeast University, Nanjing 211189, China (e-mail: qianz\_yang@seu.edu.cn).

Y. Chen is with the Department of Engineering, University of Durham, DH1 3LE Durham, U.K. (e-mail: yunfei.chen@durham.ac.uk).

E.-H. M.~Aggoune is with the AI and Sensing Technologies Research Center, University of Tabuk, Tabuk 47315, Saudi Arabia (e-mail: haggoune@ut.edu.sa).
}
}


\maketitle
\begin{abstract}
  Accurate channel estimation is essential for coherent transmission in ultra-massive multiple-input multiple-output (UM-MIMO) systems, where near-field propagation and high-dimensional spatial channels impose substantial signal processing challenges. Movable antenna architectures increase the estimation complexity further due to geometry-dependent channel variations. Existing approaches struggle to balance accuracy and complexity, motivating the use of environment-dependent propagation structures for efficient estimation. To this end, this paper proposes a unified channel map-based channel estimation framework for UM-MIMO systems, which integrates movable planar array reconfiguration and near-field spherical-wave modeling to support geometry-aware line-of-sight (LoS) estimation and efficient non-LoS (NLoS) recovery. A channel map-based LoS estimator is proposed combining coarse user position information with a Fisher information-guided antenna placement strategy. Two efficient NLoS estimation methods are also presented, including a sketch-based reduced-subspace estimator for low-complexity processing and a channel map-based estimator that leverages scatterer location information for near-optimal performance. The framework further incorporates visibility-region modeling and a structural similarity-based pilot assignment strategy for multi-user scenarios. Simulation results show that the proposed channel map-based framework improves estimation accuracy, reduces computational overhead, and enhances scalability compared with state-of-the-art benchmarks without channel maps. 
\end{abstract}
\begin{IEEEkeywords}
Channel map, channel estimation, near-field, ultra-massive MIMO, movable antenna system.
\end{IEEEkeywords}%

\section{Introduction}

\IEEEPARstart{T}{he} sixth-generation (6G) wireless networks are envisioned to usher in a new era of global connectivity with ambitious performance gains over the fifth-generation (5G)~\cite{wang2023road}.
6G, comprising an order-of-magnitude increase in access points (APs), devices, and antennas, is expected to deliver tenfold to hundredfold improvements in metrics such as spectral efficiency (SE) and energy efficiency.
These gains will be enabled by integrating a range of key technologies, such as cell-free massive multiple-input multiple-output (M-MIMO)~\cite{demir2022cell,chen2021structured,wang2026sparse}, ultra-M-MIMO (UM-MIMO)~\cite{lu2024tutorial}, millimeter-wave/terahertz bands~\cite{wang2021general}, advanced signal processing \cite{wang2025optimal}, and electromagnetic information theory-based system design~\cite{wang2025modeling}.
In particular, UM-MIMO has emerged as a cornerstone of 6G, where each node may be equipped with extremely large antenna arrays to boost capacity.
Unlike conventional M-MIMO typically operating in the far field, UM-MIMO deployments will frequently operate in the near-field region due to their large aperture dimensions and high frequencies.
As the Fraunhofer (Rayleigh) distance grows quadratically with antenna size, the plane-wave assumption of far-field propagation breaks down, and the spherical wavefronts of near-field propagation must be accounted for in channel models and signal design.
This transition to near-field propagation brings new opportunities.
For example, beamforming can be beam-focusing rather than broad beam steering, concentrating energy on the specific range and angle of a user equipment (UE) to mitigate the beam split effect~\cite{cui2024near}.
At the same time, it complicates channel estimation through distance-dependent array responses~\cite{long2026channel}.

Movable antenna architectures have been proposed to further enhance the spatial degrees of freedom in UM-MIMO systems \cite{zhu2025tutorial,wong2020fluid}.
In particular, movable planar arrays (MPAs), which represent a type of programmable metasurface-enabled architectures~\cite{tang2020wireless}, allow physical repositioning of the antenna array within a certain area or trajectory to improve coverage and signal strength.
By capitalizing on the spatial variability of the wireless channels, such dynamic array movement can help mitigate interference and adapt to multipath fading, thereby improving SE and link reliability~\cite{ma2023mimo}.
However, harnessing MPA gains requires rethinking channel modeling and estimation, since the channel responses become explicitly dependent on the array's position. 
Each perturbation of the antenna position alters the propagation geometry, rendering the channel a function of the locations of both UE and MPA.
This leads to considerable complexity in obtaining accurate channel state information (CSI), as traditional static-array estimation techniques may not directly apply to MPAs~\cite{zhu2025tutorial}.
The focus of this work is on low-complexity near-field channel estimation in such scenarios, addressing the intertwined challenges of spherical-wave propagation and array antenna movement.

Researchers have recently made progress in near-field channel estimation techniques tailored for UM-MIMO.
A distinctive feature of the near-field regime is that a propagation channel can be viewed as a superposition of a strong line-of-sight (LoS) component plus numerous non-LoS (NLoS) scattered components~\cite{yuan2024scalable,kang2025pilot}.
By first estimating the LoS component using parameters such as angle and distance, instead of treating it as deterministic, and then subtracting it, one obtains a residual measurement for the multipath NLoS component and addresses it with appropriate sparsity or statistical models~\cite{garkisch2024user}.
Nonetheless, a fully optimal estimation minimizing the mean-squared error (MSE) would require knowledge of the channel covariance, which is impractical for thousands of antennas~\cite{kay1993fundamentals}.
In fact, acquiring the massive spatial correlation matrix and implementing a minimum MSE (MMSE) estimator is computationally intractable for UM-MIMO dimensions.
This has motivated the research for estimation methods that exploit channel properties without heavy prior information.

One promising class of techniques is subspace-based estimation with reduced dimensions~\cite{kosasih2025spatial}.
These methods assume that the high-dimensional near-field channel lies predominantly in a lower-dimensional subspace determined by the array geometry or angular scattering characteristics, in which the bulk of the channel energy resides.
Notably, the reduced-subspace least squares (RSLS) approach constructs an approximate basis spanning the relevant angle and distance domain for the array, and then projects the received observations onto this basis for channel estimation.
By focusing the estimation within this smaller subspace, RSLS can dramatically reduce complexity and mitigate noise amplification, thereby achieving performance close to the MMSE bound without requiring the covariance matrix.
Recent studies have developed RSLS estimators for uniform planar arrays (UPAs) that incorporate the distance-dependent array responses into the basis construction~\cite{demir2024spatial}.
These methods show clear benefits over a naive full-dimensional least squares (LS) estimator, which ignores spatial correlation.
However, RSLS performance depends on selecting an appropriate subspace: an overly broad angular/radial support retains high dimensionality, whereas a narrow or mismatched support may introduce bias.
Thus, subspace methods typically require propagation-environment priors, such as user-range or cluster-location information.

Indeed, a recent trend in 6G research is to exploit environment-aware channel knowledge to aid communication tasks~\cite{zeng2024tutorial}.
Instead of treating the channel as a purely random entity to be estimated anew in each coherence block, the idea is to build a channel map of the radio environment that catalogs large-scale channel information for different locations.
A channel map is essentially a site-specific database stored at the base stations (BSs) or central processing units that takes the UE's position as input to yield prior information regarding the channels in that vicinity~\cite{qi2025novel}.
This information can include expected path loss or shadowing, angular power distribution, dominant scatterer locations, or even predicted instantaneous channel coefficients.
By leveraging the channel maps, the system operates in an environment-aware manner, reducing real-time pilot overhead and improving CSI accuracy.
{As long-term or quasi-static priors, these maps are periodically updated from sensing or channel measurements on the environmental timescale, with earlier refreshes triggered by persistent inconsistency with current observations~\cite{zeng2024tutorial}.}
This paradigm shifts conventional environment-unaware communications to a proactive approach, enhancing tasks like beamforming~\cite{wu2023environment}, multiple access~\cite{chen2025channel}, network optimization~\cite{zhan2023aerial}, etc.
In the context of channel estimation for UM-MIMO, one could utilize channel map-provided structural information, such as known dominant scatterer angles or LoS blockage, to guide the estimation process toward the correct channel parameters.
This can improve both accuracy and efficiency.
However, existing studies on channel maps have primarily focused on static antenna deployments, and the integration of channel maps with MPAs has yet to be fully explored.

In summary, critical gaps remain in the state-of-the-art.
First, there is a lack of scalable near-field channel estimation techniques practical for UM-MIMO.
Existing methods either require high-dimensional prior knowledge or suffer from degraded performance when that knowledge is imperfect. Second, the concept of channel map has not yet been fully integrated with physical-array adaptation by jointly exploiting environment-aware CSI and MPA-driven geometry control for channel estimation.
Third, multi-UE interference, e.g., pilot contamination, in near-field UM-MIMO could be tackled by leveraging environmental information or the unique spatial structure of near-field channels.
To this end, in this paper, we propose a unified channel map-based near-field channel estimation framework for UM-MIMO with MPAs.
Our main contributions are summarized as follows:
\begin{itemize}
  \item We propose a channel map aided LoS estimation scheme that leverages prior knowledge of coarse UE positions from a UE distribution map (UDM) together with a movable antenna optimization strategy guided by a {Fisher information-based lower-bound (FILB) benchmark}.
      This design improves geometric diversity and enables highly accurate LoS estimation under near-field spherical wave propagation scenarios.
  \item We propose two efficient NLoS estimation methods that exploit the low-rank scattering structure of near-field channels.
      The first employs a sketch algorithm-based RSLS (SA-RSLS) estimator for low-complexity subspace acquisition. The second adopts a channel map-based RSLS (CM-RSLS) estimator that incorporates scatterer information from a scatterer distribution map (SDM) and approaches genie-aided accuracy with only a compact QR decomposition (QRD).
  \item We extend the proposed NLoS estimation framework to multi-UE scenarios with pilot contamination by modeling scatterer visibility regions and constructing a multi-UE channel map-aided estimator.
      A structural similarity-based pilot assignment strategy is further developed to reduce subspace overlap among UEs, enhance robustness under limited pilot resources, and improve SE.
  \item We provide comprehensive performance evaluations demonstrating that the proposed schemes improve LoS and NLoS estimation accuracy, mitigate pilot contamination, and reduce computational complexity.
      These results confirm that incorporating environment-aware channel maps into near-field UM-MIMO enables a principled and scalable pathway toward highly accurate channel estimation and improved uplink transmission performance.
\end{itemize}

The rest of this paper is organized as follows.
Section~\ref{sec:system} introduces the considered system model.
{Section~\ref{sec:LoS estimation} and Section~\ref{sec:NLoS estimation} detail the proposed channel map-based estimation schemes for the LoS and NLoS components, respectively.}
Section~\ref{sec:NLoS estimation MU} extends the proposed NLoS estimation framework to multi-UE scenarios.
The results and discussion are provided in Section~\ref{sec:results}.
Finally, Section~\ref{sec:conclusion} draws the conclusions.

{\it Notation}: The boldface lowercase letters, $\bf x$, boldface uppercase letters, $\bf X$, and calligraphic uppercase letters, $\cal A$, denote the column vectors, matrices, and sets, respectively.
We denote by $[{\bf x}]_i$ the $i$-th element of a vector ${\bf x}$ and $[{\bf X}]_{ij}$ the $(i,j)$-th element of a matrix ${\bf X}$.
{We denote by $\diag(x_1,\ldots,x_n)$ a diagonal matrix with $x_1,\ldots,x_n$ on the diagonal and ${\rm blkdiag}({\bf X}_1,\ldots,{\bf X}_n)$ a block-diagonal matrix with square matrices ${\bf X}_1,\ldots,{\bf X}_n$ on the diagonal.}
{We use ${\rm tr}({\bf X})$ for the trace of a square matrix $\bf X$.}
The superscripts $(\cdot)^{\top}$ and $(\cdot)^{\Htran}$ denote the transpose and conjugate transpose, respectively.
We denote by ${\cal N}_{\mathbb C}\left({{\bf 0},{\bf R}}\right)$ the multivariate circularly symmetric complex Gaussian distribution with correlation matrix $\bf R$.
We denote by ${\bf I }_n$ the $n \!\times \! n$ identity matrix and $\|\cdot\|$ the $\ell_2$-norm.
We use ${\mathbb E}\{ \cdot \}$ to compute the expectation values.

\section{System Model}\label{sec:system}

Consider a UM-MIMO system where the BS is equipped with a movable planar array (MPA) comprising $N$ movable antennas.
As illustrated in Fig.~\ref{fig:1_system}, the origin of the three-dimensional (3D) Cartesian coordinate system is located at the corner of the array and the position of the $n$-th antenna is given by ${\bf a}_n = [0,y_n,z_n]^\top \in {\cal Q}$, where ${\cal Q}$ denotes the region within which the movable antennas can move freely, $n=1,\ldots,N$.
To avoid mutual coupling, a minimum spacing $d$ is enforced between any two movable antennas, i.e., $\|{\bf a}_m-{\bf a}_n\| \ge d$ for $m\neq n$.
We consider a near-field communication scenario.
More precisely, {an arbitrary UE with position ${\bf q} \!=\! [x_{\rm u},y_{\rm u},z_{\rm u}]^\top$} and $L$ scatterers with positions ${\bf p}_l = [x_l,y_l,z_l]^\top $, $l=1,\ldots,L$, are distributed within the range between the Fraunhofer (Rayleigh) distance $r_{\rm FA} = {2D^2}/{\lambda}$ and the Fresnel distance $r_{\rm FS} = 0.62 \sqrt{{D^3}/{\lambda}}$, where $D = \max_{1\le n\le N} \| {\bf a}_n - \frac{1}{N}\sum_{i=1}^{N} {\bf a}_i \|$ and $\lambda >0$ represent the array aperture size and carrier wavelength, respectively~\cite{zhu2025tutorial,wong2020fluid}.
We consider the millimeter-wave band with carrier frequency $f = 28$ GHz and wavelength $\lambda = 3\times 10^8/f \approx 10.7$ mm.

{Consider an arbitrary point ${\bf p}=r{\bf u}(\varphi,\theta)$, where $r=\|{\bf p}\|$ is the distance from ${\bf p}$ to the origin, ${\bf u}(\varphi,\theta)=[\cos\theta\cos\varphi,\cos\theta\sin\varphi,\sin\theta]^\top$ is the radial unit vector, and $\varphi$ and $\theta$ denote the azimuth angle-of-arrival (AoA) and the elevation AoA, respectively.
Here, ${\bf p}$ denotes a generic propagation point.
The array response vector for the considered MPA with respect to the origin is given by
\vspace{-0.5em}\begin{equation}\label{eq:b}
{\bf b}(\varphi,\theta,r) = [e^{-\jmath \chi(d_{1} - r)},\ldots, e^{-\jmath \chi(d_{N} - r)}]^\top \in {\mathbb C}^{N}
\vspace{-0.25em}\end{equation}
where $\chi = {2\pi}/{\lambda}$ denotes the wave number and $d_n=\|{\bf p}-{\bf a}_n\|$ is the distance from ${\bf p}$ to the $n$-th antenna of the MPA.
More precisely, we have
\vspace{-0.5em}\begin{align}\label{eq:dn}
&d_n=\|r{\bf u}(\varphi,\theta)-{\bf a}_n\| = \sqrt{r^2-2r{\bf u}^\top(\varphi,\theta){\bf a}_n+\|{\bf a}_n\|^2} \notag\\
&\approx r-{\bf u}^\top(\varphi,\theta){\bf a}_n
+\frac{\|{\bf a}_n\|^2-[{\bf u}^\top(\varphi,\theta){\bf a}_n]^2}{2r}.
\vspace{-0.25em}\end{align}
This second-order expansion is accurate when $\max_n\|{\bf a}_n\|/r\ll 1$. The phase error induced by the neglected higher-order terms is of order ${\cal O}(\chi\max_n\|{\bf a}_n\|^3/r^2)$ and is negligible for coherent processing when $\chi\max_n\|{\bf a}_n\|^3/r^2\ll 1$~\cite{lu2024tutorial}.
It is included only to illustrate the distance-dependent phase curvature; the exact Euclidean distance is used in the subsequent signal models and estimator design.
By substituting \eqref{eq:dn} into \eqref{eq:b}, the $n$-th entry of the array response vector can be approximated as
\vspace{-0.5em}\begin{equation}\label{eq:b1}
[{\bf b}(\varphi,\theta,r)]_n \!\approx\! e^{\jmath\chi\left({\bf u}^\top(\varphi,\theta){\bf a}_n-\frac{\|{\bf a}_n\|^2-[{\bf u}^\top(\varphi,\theta){\bf a}_n]^2}{2r}\right)} \!\in {\mathbb C}
\vspace{-0.25em}\end{equation}
where the terms proportional to $1/r$ characterize the distance-dependent phase curvature of the spherical wave and can be ignored when $r>r_{\rm FA}$.
This implies that the uplink incident wave exhibits a plane wavefront, and its array response vector only depends on the azimuth and elevation angles.

Throughout this paper, we adopt unit-modulus array responses to characterize the element-dependent phase and geometric variations of spherical-wave propagation.
In the considered near-field region, the spherical-spreading amplitude \(1/d_n\) varies more slowly across the array than the phase term \(e^{-\jmath\chi d_n}\), and is therefore approximated by a pathwise common amplitude~\cite{lu2024tutorial}.
Its common contribution is represented by the channel gain and the array-averaged large-scale fading (LSF) coefficients in Section~\ref{subsec:channel}, while the residual element-wise amplitude variation is neglected, since the proposed framework primarily exploits the near-field phase signatures.
}

\begin{figure}[t!]
\centering
\includegraphics[scale=1]{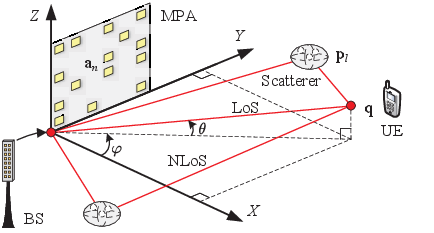}
\vspace{0em}
\caption{Considered near-field UM-MIMO system. 
\label{fig:1_system}}\vspace{-1.em}
\end{figure}

\subsection{Channel Model}\label{subsec:channel}
We adopt the block fading model where the channel coefficients can be assumed to be fixed in each coherence block.
We denote by
\vspace{-0.5em}\begin{equation}\label{eq:h}
{\bf h} = {\bf h}^{\rm L} + {\bf h}^{\rm N} \in {\mathbb C}^{N}
\vspace{-0.25em}\end{equation}
the channel transfer function between the MPA and the UE, where ${\bf h}^{\rm L}$ and ${\bf h}^{\rm N}$ are the deterministic LoS component and the stochastic NLoS component, respectively~\cite{wang2022pervasive,wang2025enhanced}.
{The LoS component is given by
\vspace{-0.5em}\begin{equation}\label{eq:hLOS}
{\bf h}^{\rm L} = \sqrt{\beta^{\rm L}}{\bf b}(\varphi_{\rm u},\theta_{\rm u},r_{\rm u})
\vspace{-0.25em}\end{equation}
where $\varphi_{\rm u}$, $\theta_{\rm u}$, and $r_{\rm u}$ are the azimuth AoA, elevation AoA, and the UE-to-origin distance, respectively. The LoS LSF coefficient is given by $\beta^{\rm L} = \frac{1}{N} ({\bf h}^{\rm L})^{\Htran}{\bf h}^{\rm L}$.}
Given that the scatterers exist in a near-field 3D region characterized by ${\cal S} = \{l  : r_l \in [r_{\min}, r_{\max}], \varphi_l \in [\varphi_{\min}, \varphi_{\max}],\theta_l \in [\theta_{\min}, \theta_{\max}], \forall l\}$, the NLoS component consists of a superposition of multipath components from the scatterers in ${\cal S}$ and is given by
\vspace{-0.5em}\begin{equation}\label{eq:hNLoS}
{\bf h}^{\rm N} \!=\! {\int_{r_{\min}}^{r_{\max}}}{\int_{\varphi_{\min}}^{\varphi_{\max}}}{\int_{\theta_{\min}}^{\theta_{\max}}} g(\varphi,\theta,r) {\bf b}(\varphi,\theta,r) {\rm d}\varphi {\rm d}\theta {\rm d}r
\vspace{-0.25em}\end{equation}
where $r_{\max} < r_{\rm FA}$ and ${\bf b}(\varphi,\theta,r)$ is the array response vector for an arbitrary scatterer.
We denote by $g(\varphi,\theta,r)$ the corresponding angular and distance spreading function determining the gain and phase-shift, which is modeled as a spatially uncorrelated circularly symmetric Gaussian stochastic process~\cite{bjornson2017massive}.
Therefore, the NLoS component ${\bf h}^{\rm N}$ can be characterized by its spatial correlation matrix
\vspace{-0.5em}\begin{align}\label{eq:RN}
&{\bf R}^{\rm N} = {\mathbb E}\{{\bf h}^{\rm N}({\bf h}^{\rm N})^{\Htran}\} \\
&\notag = \beta^{\rm N} \!\!{\int_{r_{\min}}^{r_{\max}}}\!\!{\int_{\varphi_{\min}}^{\varphi_{\max}}}\!\!{\int_{\theta_{\min}}^{\theta_{\max}}} \!\!\!f(\varphi,\theta,r) {\bf b}(\varphi,\theta,r){\bf b}^{\Htran}(\varphi,\theta,r) {\rm d}\varphi {\rm d}\theta {\rm d}r
\vspace{-0.25em}\end{align}
as ${\bf h}^{\rm N} \sim {\cal N}_{\mathbb C}({\bf 0},{\bf R}^{\rm N})$, where $\beta^{\rm N} = \frac{1}{N} {\rm tr}({\bf R}^{\rm N})$ represents the NLoS LSF coefficient and $f(\varphi,\theta,r)$ is the normalized spatial scattering function with ${\int_{r_{\min}}^{r_{\max}}}{\int_{\varphi_{\min}}^{\varphi_{\max}}}{\int_{\theta_{\min}}^{\theta_{\max}}} f(\varphi,\theta,r) {\rm d}\varphi {\rm d}\theta {\rm d}r = 1$.
Letting $\kappa$ be the Rician K-factor, we have
\vspace{-0.5em}\begin{equation}
\beta_{}^{\rm L} = {\frac{\kappa_{}}{\kappa_{}+1} }\beta_{}, \quad \beta_{}^{\rm N} = {\frac{1}{\kappa_{}+1} } \beta_{}
\vspace{-0.25em}\end{equation}
where $\beta_{}$ is the LSF coefficient describing pathloss and shadowing of the full channel $\bf h$.

\vspace{0.em}
\subsection{Signal Transmission Model}
We consider a scenario where the UE is assigned a predefined pilot sequence 
and transmits it during the uplink pilot phase of each coherence block.
The received pilot signal after despreading with the pilot sequence is given by
\cite[Sec.3]{bjornson2017massive}
\vspace{-0.5em}\begin{equation}\label{eq:received signal at AP}
{\bf y}_{}^{} = \sqrt{\rho} {\bf h} + {\bf n}\in {\mathbb C}^{N}
\vspace{-0.25em}\end{equation}
in the case of no pilot reuse, where $\rho>0$ represents the pilot SNR and ${\bf n} \sim {\cal N}_{\mathbb C}({\bf 0}, {\bf I}_N)$ is the receiver noise.
The BS estimates the LoS component ${\bf h}^{\rm L}$ and NLoS component ${\bf h}^{\rm N}$ in sequence based on the received signal ${\bf y}$, which are elaborated in Section~\ref{sec:LoS estimation} and Section \ref{sec:NLoS estimation}, respectively.
The case with pilot contamination caused by pilot reuse among multiple UEs is investigated in Section \ref{sec:NLoS estimation MU}.
This modular design allows tailored algorithms to be applied for each component, reduces computational overhead, and enables integration with environment-aware priors from channel maps.

\vspace{-0.25em}
\section{Estimation of the LoS Component}\label{sec:LoS estimation}

It is common for the NLoS component to be treated as noise when estimating the LoS component \cite{garkisch2024user}, especially at the considered millimeter-wave band where the LoS component dominates with a large Rician K-factor, e.g., $\kappa = 10$.
{Therefore, the received signal in \eqref{eq:received signal at AP} can be rewritten as
\vspace{-0.5em}\begin{equation}\label{eq:received signal at AP1}
{\bf y}_{}^{} = \sqrt{\rho} {\bf h}^{\rm L} + {\bf n}^\prime = \alpha {\pmb b}({\bf q};{\bf a}) + {\bf n}^\prime \in {\mathbb C}^{N}
\vspace{-0.25em}\end{equation}
where ${\bf n}^\prime = \sqrt{\rho}{\bf h}^{\rm N}+{\bf n}$ and $\alpha=\sqrt{\rho\beta^{\rm L}}e^{\jmath\chi r_{\rm u}}$ is the equivalent complex LoS gain including the reference phase associated with the UE-to-origin distance\footnote{{The coefficient $\alpha$ is an effective complex gain in the received LoS signal model rather than a LSF coefficient like $\beta$, $\beta^{\rm L}$, and $\beta^{\rm N}$ in Section~\ref{subsec:channel}}}, and thus, $|\alpha|^2=\rho\beta^{\rm L}$.
Here, for the UE position ${\bf q}$ and antenna position vector ${\bf a}$, we define the absolute-distance response ${\pmb b}({\bf q};{\bf a})$ by $[ {\pmb b}({\bf q};{\bf a})]_n=e^{-\jmath\chi d_{{\rm u},n}}$, where $d_{{\rm u},n}=\|{\bf q}-{\bf a}_n\|$, so that ${\pmb b}({\bf q};{\bf a})=e^{-\jmath\chi r_{\rm u}}{\bf b}(\varphi_{\rm u},\theta_{\rm u},r_{\rm u})$.

Since ${\bf h}^{\rm N}\sim{\cal N}_{\mathbb C}({\bf 0},{\bf R}^{\rm N})$, the combined disturbance ${\bf n}^{\prime}=\sqrt{\rho}{\bf h}^{\rm N}+{\bf n}$ follows ${\cal N}_{\mathbb C}({\bf 0},\rho{\bf R}^{\rm N}+{\bf I}_N)$, which is generally spatially correlated.
For tractable LoS localization, we use ${\bf n}^{\prime}\approx{\bf n}_{\rm w}\sim{\cal N}_{\mathbb C}({\bf 0},\sigma^2{\bf I}_N)$ with the same average power per antenna~\cite{yuan2024scalable}, where
\vspace{-0.5em}\begin{equation}\label{eq:effective_variance}
	\sigma^2 = \frac{1}{N} \text{tr}(\rho {\mathbf{R}^{\rm N}} + {\mathbf{I}_N}) = \frac{\rho}{N} \text{tr}({\mathbf{R}^{\rm N}}) + 1
\vspace{-0.25em}\end{equation}
This approximation preserves the conditional mean of the received signal and the average disturbance power while ignoring the NLoS spatial correlation.
It is exact when ${\bf R}^{\rm N}$ is proportional to ${\bf I}_N$.
More generally, it is expected to work well when the LoS component dominates, as characterized by a large Rician K-factor, or when \(\rho{\bf R}^{\rm N}\) is small relative to \({\bf I}_N\), such that the combined disturbance covariance is close to a scaled identity matrix.

We denote by ${\bf a} = [ {\bf a}_1^\top, \ldots ,{\bf a}_N^\top ]^\top$ the collective position vector of the movable antennas.
Conditioned on the UE position ${\bf q}$, the movable antenna positions ${\bf a}$, and the equivalent complex channel gain $\alpha$, the observation $\mathbf{y}$ follows $\mathcal{N}_{\mathbb C}\left( {\alpha {\pmb b}({\bf q};{\bf a}), {\sigma ^2}\mathbf{I}_N} \right)$.
Therefore, the maximum likelihood (ML) estimates of the UE position ${\bf q}$ and the complex channel gain $\alpha$ are given by~\cite{kay1993fundamentals}
\vspace{-0.5em}\begin{equation}
({\hat {\bf q}}, {\hat \alpha}) = \mathop {\arg \max }\limits_{{\bf q},\alpha }  - \frac{1}{{{\sigma ^2}}}{\left\| {\mathbf{y} - \alpha {\pmb b}({\bf q};{\bf a})} \right\|^2}.
	\label{e5}
\vspace{-0.25em}\end{equation}

To characterize the LoS localization geometry, we derive a local FILB under \eqref{eq:effective_variance} by freezing $\sigma^2$ at its value at the operating point.
	
\subsection{Fisher Information-Based Lower-Bound Benchmark}

Consider the signal model in \eqref{eq:received signal at AP1}.
We first recall from \cite{collier2005fisher} the standard Fisher information matrix (FIM) for a complex Gaussian observation with a real-valued parameter vector and then specialize it to UE position estimation.

\begin{lemm}\label{lemm:FIM}
For an $N$-dimensional complex Gaussian observation signal ${\bf y} \sim {\cal N}_{\mathbb C}({\boldsymbol{\mu}}({\bf q}),{\boldsymbol{\Sigma}}({\bf q}))$, the $\left( {i,j} \right)$-th entry of the FIM ${\bf J}\in {\mathbb R}^{3\times 3}$ for estimating the real UE position vector $\mathbf{q}$ can be computed via
\vspace{-0.5em}\begin{equation}
	\begin{aligned}
	{[ {{\mathbf{J}}} ]_{ij}}
	={}&2\Re \left\{
	\frac{{\partial {{\left( {\boldsymbol{\mu}} \right)}^\text{H}}}}{{\partial [{\bf q}]_i}}
	{{\left( {{\boldsymbol{\Sigma}}} \right)}^{ - 1}}
	\frac{{\partial \boldsymbol{\mu}}}{{\partial [{\bf q}]_j}}
	\right\} \\
	&+\mathrm{tr}\left(
	{{\left( {{\boldsymbol{\Sigma}}} \right)}^{ - 1}}
	\frac{{\partial {\boldsymbol{\Sigma}}}}{{\partial [{\bf q}]_i}}
	{{\left( {{\boldsymbol{\Sigma}}} \right)}^{ - 1}}
	\frac{{\partial {\boldsymbol{\Sigma}}}}{{\partial [{\bf q}]_j}}
	\right).
	\end{aligned}
		\label{e6}
\vspace{-0.25em}\end{equation}
\end{lemm}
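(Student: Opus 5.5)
This is the Slepian--Bangs formula, and the plan is to derive it directly from the log-likelihood of the circularly symmetric complex Gaussian observation. Write the log-likelihood as
\begin{equation*}
\ell({\bf q}) = -N\ln\pi - \ln\det{\boldsymbol\Sigma}({\bf q}) - ({\bf y}-{\boldsymbol\mu})^{\Htran}{\boldsymbol\Sigma}^{-1}({\bf y}-{\boldsymbol\mu}).
\end{equation*}
We then use the definition ${[{\bf J}]_{ij}} = {\mathbb E}\{\partial_i\ell\,\partial_j\ell\}$, or equivalently $-{\mathbb E}\{\partial_i\partial_j\ell\}$ under the usual regularity conditions that allow interchanging differentiation and integration. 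Here $\partial_i$ denotes $\partial/\partial[{\bf q}]_i$, and ${\bf q}$ is real, so all derivatives are ordinary real derivatives of complex-valued quantities.

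The first step is to compute the score. Using $\partial_i\ln\det{\boldsymbol\Sigma}={\rm tr}({\boldsymbol\Sigma}^{-1}\partial_i{\boldsymbol\Sigma})$ and $\partial_i{\boldsymbol\Sigma}^{-1}=-{\boldsymbol\Sigma}^{-1}(\partial_i{\boldsymbol\Sigma}){\boldsymbol\Sigma}^{-1}$, and setting ${\bf e}={\bf y}-{\boldsymbol\mu}$, we obtain
\begin{equation*}
\partial_i\ell = -{\rm tr}({\boldsymbol\Sigma}^{-1}\partial_i{\boldsymbol\Sigma}) + {\bf e}^{\Htran}{\boldsymbol\Sigma}^{-1}(\partial_i{\boldsymbol\Sigma}){\boldsymbol\Sigma}^{-1}{\bf e} + 2\Re\{(\partial_i{\boldsymbol\mu})^{\Htran}{\boldsymbol\Sigma}^{-1}{\bf e}\}.
\end{equation*}
For the last term, the derivative of the quadratic form with respect to ${\boldsymbol\mu}$ produces a pair of conjugate terms, which combine into twice a real part.

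The second step is to take the expectation of the product $\partial_i\ell\,\partial_j\ell$ with ${\bf e}\sim{\cal N}_{\mathbb C}({\bf 0},{\boldsymbol\Sigma})$. The cross terms between the linear term and the quadratic-minus-trace term vanish because they involve odd moments of a circularly symmetric vector. For the linear-times-linear product, write $2\Re\{z\}=z+z^*$. Circular symmetry gives ${\mathbb E}\{{\bf e}{\bf e}^{\top}\}={\bf 0}$, so only ${\mathbb E}\{{\bf e}{\bf e}^{\Htran}\}={\boldsymbol\Sigma}$ survives, and this yields $2\Re\{(\partial_i{\boldsymbol\mu})^{\Htran}{\boldsymbol\Sigma}^{-1}\partial_j{\boldsymbol\mu}\}$. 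For the quadratic-times-quadratic product, define ${\bf w}={\boldsymbol\Sigma}^{-1/2}{\bf e}\sim{\cal N}_{\mathbb C}({\bf 0},{\bf I})$ and ${\bf A}_i={\boldsymbol\Sigma}^{-1/2}(\partial_i{\boldsymbol\Sigma}){\boldsymbol\Sigma}^{-1/2}$, which is Hermitian. The complex Gaussian fourth-moment identity then gives
\begin{equation*}
{\mathbb E}\{{\bf w}^{\Htran}{\bf A}_i{\bf w}\,{\bf w}^{\Htran}{\bf A}_j{\bf w}\}={\rm tr}({\bf A}_i){\rm tr}({\bf A}_j)+{\rm tr}({\bf A}_i{\bf A}_j).
\end{equation*}
Subtracting the product of the means, ${\rm tr}({\bf A}_i){\rm tr}({\bf A}_j)$, leaves ${\rm tr}({\boldsymbol\Sigma}^{-1}\partial_i{\boldsymbol\Sigma}\,{\boldsymbol\Sigma}^{-1}\partial_j{\boldsymbol\Sigma})$. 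Adding the two surviving pieces gives \eqref{e6}.

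The main obstacle is handling the fourth-moment identity correctly in the complex case. For circularly symmetric variables, the Isserlis/Wick theorem has only one nonvanishing pairing per conjugate pair, and this is exactly why the trace term carries a factor $1$ rather than the factor $\tfrac12$ that appears in the real-valued case. I would verify it by expanding in components and using ${\mathbb E}\{w_a^*w_bw_c^*w_d\}=\delta_{ab}\delta_{cd}+\delta_{ad}\delta_{bc}$. Alternatively, the whole result is the standard Slepian--Bangs formula, which can be cited from \cite{collier2005fisher,kay1993fundamentals}. Finally, in the setting of \eqref{eq:received signal at AP1}, where $\sigma^2$ is frozen at the operating point, ${\boldsymbol\Sigma}=\sigma^2{\bf I}_N$ does not depend on ${\bf q}$, so the trace term drops out and only the mean term is needed later.
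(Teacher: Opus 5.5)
Your proposal is correct, but it does not follow the paper's route. The paper does not prove Lemma~\ref{lemm:FIM}; it recalls the Slepian--Bangs formula from \cite{collier2005fisher} and immediately specializes it to $\boldsymbol{\Sigma}=\sigma^2{\bf I}_N$. You instead derive it from first principles:
\begin{itemize}
\item you split the real-valued score into the linear term $2\Re\{(\partial_i\boldsymbol{\mu})^{\Htran}\boldsymbol{\Sigma}^{-1}{\bf e}\}$ and a centered quadratic form;
\item the cross terms vanish because they involve odd moments of ${\bf e}$;
\item the linear--linear term uses ${\mathbb E}\{{\bf e}{\bf e}^{\top}\}={\bf 0}$;
\item the quadratic--quadratic term follows from the circular Isserlis identity ${\mathbb E}\{w_a^*w_bw_c^*w_d\}=\delta_{ab}\delta_{cd}+\delta_{ad}\delta_{bc}$.
\end{itemize}
Each of these steps checks out and reproduces \eqref{e6} exactly.

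The citation is shorter and leaves regularity and properness conditions to the reference. Your derivation shows explicitly that the formula depends on circular symmetry, which is encoded in the paper's ${\cal N}_{\mathbb C}$ notation. Circularity enters in three places: your log-likelihood $-\ln\det\boldsymbol{\Sigma}-{\bf e}^{\Htran}\boldsymbol{\Sigma}^{-1}{\bf e}$, the vanishing pseudo-covariance, and the vanishing of the $(a,c)(b,d)$ pairing. For an improper Gaussian, the density, and hence both terms of the FIM, would change.

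One small imprecision in your commentary: the factor $1$ instead of the real-case $\tfrac12$ in front of the trace term does not come from the pairing count alone. It is the product of two effects:
\begin{itemize}
\item the complex log-likelihood lacks the $\tfrac12$ prefactors that the real one has, so the real case picks up a factor $\tfrac14$;
\item the covariance of the two quadratic forms is ${\rm tr}({\bf A}_i{\bf A}_j)$ in the complex case versus $2\,{\rm tr}(\cdot)$ in the real case, because two Wick pairings survive instead of three.
\end{itemize}
Thus the real case gives $\tfrac14\cdot 2=\tfrac12$ and the complex case gives $1\cdot 1=1$. This does not affect the validity of your argument.
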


Applying Lemma \ref{lemm:FIM}, we set ${\boldsymbol{\mu}}=\alpha{\pmb b}({\bf q};{\bf a})$ and $\boldsymbol{\Sigma}=\sigma^2\mathbf{I}_N$ for ${\bf q}=[x_{\rm u},y_{\rm u},z_{\rm u}]^\top$.
Although ${\bf R}^{\rm N}$ and hence $\sigma^2$ may depend on ${\bf q}$ through pathloss and local scattering, the local FILB freezes $\sigma^2$ at the operating point, which is appropriate when the average NLoS power varies slowly within the localization region.
Thus, $\frac{{\partial \boldsymbol{\Sigma}}}{{\partial {[{\bf q}]_i}}}=\mathbf{0}$ and the FIM in \eqref{e6} simplifies to
\vspace{-0.5em}\begin{equation}
{\left[ {{\mathbf{J}}} \right]_{ij}} = \frac{2}{\sigma^2}\Re \left\{ {\frac{{\partial {{\left( {\boldsymbol{\mu}} \right)}^\text{H}}}}{{\partial {[{\bf q}]_i}}} \frac{{\partial \boldsymbol{\mu}}}{{\partial {[{\bf q}]_j}}}} \right\}.
	\label{e7}
\vspace{-0.25em}\end{equation}
Therefore, the derivative of ${\boldsymbol{\mu}}$ with respect to the $i$-th coordinate of the UE position ${\bf q}$ is given by
\vspace{-0.5em}\begin{align}	\label{e8}
\frac{{\partial \boldsymbol{\mu}}}{{\partial {[{\bf q}]_i}}} &= -\jmath\chi \alpha \\
\notag &\cdot \mathrm{diag}\left(\frac{[{\bf q}]_i - [{\bf a}_{1}]_i}{d_{{\rm u},1}}, \dots, \frac{[{\bf q}]_i - [{\bf a}_{N}]_i}{d_{{\rm u},N}}\right) {\pmb b}({\bf q};{\bf a}).
\vspace{-0.25em}\end{align}
Substituting \eqref{e8} into the FIM expression in \eqref{e7}, we have
\vspace{-0.5em}\begin{equation}
	{\left[ \mathbf{J} \right]_{ij}} = \frac{{2{{\left| \alpha \right|}^2} \chi^2 }}{{{\sigma ^2}}}\sum\limits_{n = 1}^N {\frac{{\left( {[{\bf q}]_i - [{\bf a}_{n}]_i} \right)\left( {[{\bf q}]_j - [{\bf a}_{n}]_j} \right)}}{d_{{\rm u},n}^2}}
	\label{e9}
\vspace{-0.25em}\end{equation}
and the FIM in a compact matrix form as
\vspace{-0.5em}\begin{equation}
	\mathbf{J} = \frac{{2{{\left| \alpha  \right|}^2} \chi^2 }}{{{\sigma ^2}}}\sum\limits_{n = 1}^N {\frac{{\left( {{\bf q} - {\bf a}_n} \right){{\left( {{\bf q} - {\bf a}_n} \right)}^\top}}}{d_{{\rm u},n}^2}}.
	\label{e10}
\vspace{-0.25em}\end{equation}
For the adopted local benchmark with fixed $\alpha$ and $\sigma^2$, the covariance of any unbiased position estimate $\hat{\bf q}$ is lower bounded in the positive semidefinite sense by~\cite{collier2005fisher}
\vspace{-0.5em}\begin{equation}
\mathbb{E}\{ {( {\hat{\bf q}- {\bf q}} ){{( {\hat{\bf q} - {\bf q}} )}^\top}} \} \succeq {\mathbf{J}^{ - 1}}.
\vspace{-0.25em}\end{equation}

Under the adopted local model with fixed $\sigma^2$, we define the benchmark for the position RMSE as
${\sf FILB}\triangleq\sqrt{\mathrm{tr}({\bf J}^{-1})}$.
If the effective covariance varies appreciably with ${\bf q}$, the covariance derivative term in \eqref{e6} must be retained.
Accordingly, the FILB based on \eqref{e10} is a local benchmark for antenna geometry design, rather than the Bayesian Cram{\'e}r--Rao lower bound for the full model with position-dependent covariance.

As shown in \eqref{e10}, ${\bf J}$ is a sum of rank-one matrices determined by the antenna-to-UE directions.
Diverse observation directions improve the conditioning of ${\bf J}$, whereas clustered directions make it nearly singular.
We therefore adopt the D-optimality criterion, which maximizes $\det({\bf J})$~\cite{zamir1998proof}.
Since the fixed $\sigma^2$ only scales ${\bf J}$, it does not affect the antenna placement that maximizes this criterion.
The resulting optimization problem is formulated as}
\begin{subequations}
	\vspace{-0.5em}\begin{align}
		{\sf P}_{1}: \quad
		& \mathop {\max }\limits_{\{ {{\bf a}_n} \in {\mathbb R}^3\}} \log {\det(\mathbf{J} )} \label{e11a}\\
		{\rm s.t.} \quad & {\bf a}_n \in \mathcal{Q},\ n = 1,\ldots,N, \label{e11b} \\
		&  \|{\bf a}_m-{\bf a}_n\| \ge d,\ m,n = 1,\ldots,N,\ m\neq n, \label{e11c}
	\vspace{-0.25em}\end{align}
\end{subequations}
where~\eqref{e11b} confines each movable antenna to a feasible region $\mathcal{Q}$ and~\eqref{e11c} enforces a minimum separation $d$ to avoid collisions.
{This antenna placement optimization is designed to improve the FIM conditioning of subsequent near-field LoS estimation.}
Problem ${\sf P}_{1}$ is a challenging non-convex optimization problem.
{To this end, we propose an iterative algorithm based on the projected gradient ascent (PGA) algorithm~\cite{xu2024efficient}, augmented with a penalty method to handle the non-convex inter-antenna distance constraint.}

\vspace{-0.5em}
\subsection{Channel Map-based PGA Localization}
{In practice, the exact UE position ${\bf q}$ is not available at the BS.
To this end, we leverage a channel map, namely UDM, to provide an initial and coarse estimate ${\tilde{\bf q}}$ based on the environment database $\cal E$, which is expressed as
\vspace{-0.5em}\begin{equation}
	{\cal M}_{\rm UDM}:\quad
	{\tilde{\bf q}} = f_{\rm UDM}({\cal E})
	\vspace{-0.25em}\end{equation}
and constructed via approaches like channel measurements.
The provided coarse UE position is given by
\vspace{-0.5em}\begin{equation}
	{\tilde{\bf q}} = {\bf q} + {\boldsymbol \delta}_{\rm UDM}
	\vspace{-0.25em}\end{equation}
where ${\boldsymbol \delta}_{\rm UDM} = \epsilon_{\rm UDM}({\boldsymbol\zeta}_{\rm UDM}\odot{\bf q})$ is the deviation, $\epsilon_{\rm UDM}$ is the error of the ${\cal M}_{\rm UDM}$, and ${\boldsymbol\zeta}_{\rm UDM}\in {\mathbb R}^3$ is a random vector with each entry $[{\boldsymbol\zeta}_{\rm UDM}]_i \sim {\cal U} (-{1}/{2}, {1}/{2})$, $i=1,\ldots,3$.
In the following PGA design, evaluating the FIM in \eqref{e10} at the UDM-provided \(\widetilde{\bf q}\) naturally couples the UDM prior with the PGA antenna placement.
The PGA optimization and mechanical repositioning processes are performed on a slower configuration timescale rather than within each coherence block.
Once repositioning is complete, the optimized geometry is reused over multiple coherence blocks, avoiding repeated mechanical movement in each block. Fresh pilots are acquired in every block rather than relying on the channel observed before movement.
This operating mode is intended for quasi-static or low-mobility UEs whose coarse positions remain valid during reconfiguration and geometry reuse, rather than for high-mobility scenarios requiring per-block adaptation.}

We next reformulate problem ${\sf P}_{1}$.
More precisely, to handle the non-convex constraint~\eqref{e11c}, we employ a penalty method, which transforms the constrained problem into a more tractable form by incorporating the non-convex constraint into the objective function.
This approach approximates problem ${\sf P}_{1}$ as
\begin{subequations}
	\vspace{-0.5em}\begin{align}
		\label{e12a}{\sf P}_{2}: \quad
		&\mathop {\max }\limits_{\{ {{\bf a}_n} \in {\mathbb R}^3\}} \log {\det(\mathbf{J})} \\
		& \notag \qquad \qquad - \frac{\gamma }{2} \sum_{1\leq m<n\leq N} \left[ \max(0, d - {\left\| {\mathbf{a}_m - \mathbf{a}_n} \right\|}) \right]^2 \\
		{\rm s.t.} \quad & \eqref{e11b}
	\vspace{-0.25em}\end{align}
\end{subequations}
where $\gamma > 0$ is a constant penalty coefficient, and each unordered antenna pair is counted once.
The quadratic penalty term is zero when the constraint is satisfied and grows rapidly as the violation increases.

Problem ${\sf P}_{2}$ can now be effectively solved via the PGA algorithm.
The PGA algorithm iteratively updates the antenna positions by taking a step in the direction of the gradient of the objective function~\eqref{e12a}, followed by a projection onto the feasible set $\mathcal{Q}$.
The crucial step is to derive the gradient of the objective function with respect to the position of each antenna $\mathbf{a}_n$.
Specifically, the total gradient can be decomposed into two components, one from the D-optimality objective, denoted as $\boldsymbol{\varrho}^{\mathrm{obj}}_n$, and one from the penalty term, denoted as $\boldsymbol{\varrho}^{\mathrm{pen}}_n$.
The gradient of the primary objective $\log\det(\mathbf{J})$ follows from Jacobi's formula, as stated in the following lemma.

\vspace{-0.5em}
\begin{lemm}\label{lemm:D_optimality}
	The gradient of the D-optimality objective function with respect to the position of the $n$-th antenna, $\mathbf{a}_n$, is given by $\boldsymbol{\varrho}^{\mathrm{obj}}_n = \nabla_{\mathbf{a}_n} \log\det(\mathbf{J})$, where its $j$-th coordinate is
	\vspace{-0.5em}\begin{equation}
		[\boldsymbol{\varrho}^{\mathrm{obj}}_n]_j = \frac{\partial (\log\det(\mathbf{J}))}{\partial [\mathbf{a}_n]_j } =  \mathrm{tr}\left( \mathbf{J}^{-1} \frac{\partial \mathbf{J}}{\partial [\mathbf{a}_{n}]_j} \right).
	\vspace{-0.25em}\end{equation}
	The derivative of the FIM, $\frac{\partial \mathbf{J}}{\partial [\mathbf{a}_{n}]_j}$, is given by~\eqref{e15} at the bottom of the next page, where ${\hat{\bf e}}_j$ is the standard basis vector for the $j$-th coordinate.
\end{lemm}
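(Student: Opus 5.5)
The plan has two parts: Jacobi's formula for the log-determinant, then a term-by-term differentiation of the rank-one sum in \eqref{e10}.

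\textbf{Step 1: Jacobi's formula.} For a symmetric positive definite $\mathbf{J}$ that depends smoothly on a scalar parameter $t$, Jacobi's formula gives $\partial \det(\mathbf{J})/\partial t = \det(\mathbf{J})\,\mathrm{tr}(\mathbf{J}^{-1}\partial\mathbf{J}/\partial t)$. Dividing by $\det(\mathbf{J})$ yields
\[
\frac{\partial \log\det(\mathbf{J})}{\partial t} = \mathrm{tr}\left(\mathbf{J}^{-1}\frac{\partial \mathbf{J}}{\partial t}\right).
\]
Setting $t=[\mathbf{a}_n]_j$ gives the stated trace expression. This requires $\mathbf{J}\succ 0$, which holds when the antenna-to-UE directions span $\mathbb{R}^3$. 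This is exactly the regime where the objective of ${\sf P}_{1}$ is finite, so it is a natural standing assumption.

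\textbf{Step 2: the FIM derivative.} By \eqref{e10}, only the $n$-th summand depends on $\mathbf{a}_n$, and the constant prefactor $2|\alpha|^2\chi^2/\sigma^2$ factors out. Write $\mathbf{v}_n=\mathbf{q}-\mathbf{a}_n$ and $d_{{\rm u},n}=\|\mathbf{v}_n\|$. Then
\[
\frac{\partial \mathbf{v}_n}{\partial [\mathbf{a}_n]_j}=-\hat{\mathbf{e}}_j, \qquad
\frac{\partial d_{{\rm u},n}^2}{\partial [\mathbf{a}_n]_j}=-2[\mathbf{v}_n]_j .
\]
Applying the product and quotient rules to $\mathbf{v}_n\mathbf{v}_n^\top/d_{{\rm u},n}^2$ gives
\[
\frac{\partial \mathbf{J}}{\partial [\mathbf{a}_n]_j}=\frac{2|\alpha|^2\chi^2}{\sigma^2}\left[-\frac{\hat{\mathbf{e}}_j\mathbf{v}_n^\top+\mathbf{v}_n\hat{\mathbf{e}}_j^\top}{d_{{\rm u},n}^2}+\frac{2[\mathbf{q}-\mathbf{a}_n]_j\,\mathbf{v}_n\mathbf{v}_n^\top}{d_{{\rm u},n}^4}\right].
\]
This should match \eqref{e15}. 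The scalar prefactor cancels when the result is substituted into the trace in Step 1, because $\mathbf{J}^{-1}$ carries its reciprocal. The scaling of $\mathbf{J}$ is therefore irrelevant, consistent with the remark after \eqref{e10}.

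\textbf{Main obstacle.} The calculus itself is routine. The step most likely to go wrong is matching the sign conventions and the factor of $2$ in the quotient-rule term with the form printed in \eqref{e15}. I would verify this by checking $\mathrm{tr}(\partial\mathbf{J}/\partial[\mathbf{a}_n]_j)=0$. This identity must hold because $\mathrm{tr}(\mathbf{v}_n\mathbf{v}_n^\top/d_{{\rm u},n}^2)=1$ is constant in $\mathbf{a}_n$, so it gives a quick consistency check on the derived expression.

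Optionally, the gradient can be simplified further. Using the symmetry of $\mathbf{J}^{-1}$ and cyclicity of the trace,
\[
[\boldsymbol{\varrho}^{\mathrm{obj}}_n]_j=\frac{2|\alpha|^2\chi^2}{\sigma^2}\left(-\frac{2[\mathbf{J}^{-1}\mathbf{v}_n]_j}{d_{{\rm u},n}^2}+\frac{2[\mathbf{v}_n]_j\,\mathbf{v}_n^\top\mathbf{J}^{-1}\mathbf{v}_n}{d_{{\rm u},n}^4}\right).
\]
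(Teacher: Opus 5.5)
Your proposal is correct and follows the paper's proof essentially step by step. You apply Jacobi's formula for the trace expression, isolate the $n$-th rank-one summand, and use the quotient rule with $\partial\mathbf{v}_n/\partial[\mathbf{a}_n]_j=-\hat{\mathbf{e}}_j$ and $\partial\|\mathbf{v}_n\|^2/\partial[\mathbf{a}_n]_j=-2[\mathbf{v}_n]_j$. The one cosmetic difference is that \eqref{e15} is written at the UDM estimate $\tilde{\mathbf{q}}$, the point where the PGA evaluates $\mathbf{J}$, so you should set $\mathbf{v}_n=\tilde{\mathbf{q}}-\mathbf{a}_n$ instead of $\mathbf{q}-\mathbf{a}_n$; the algebra is unchanged.
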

\begin{figure*}
	\vspace{-0.5em}\begin{equation}
		{\frac{\partial \mathbf{J}}{\partial  [\mathbf{a}_{n}]_j} = \frac{\xi}{ \left\| {{\tilde{\bf q}} - {\mathbf{a}_n}} \right\|^4 } \left[ 2([{\tilde{\bf q}}]_j -  [\mathbf{a}_{n}]_j)({\tilde{\bf q}}-\mathbf{a}_n)({\tilde{\bf q}}-\mathbf{a}_n)^\top - \left\| {{\tilde{\bf q}} - {\mathbf{a}_n}} \right\|^2 ({\hat{\bf e}}_j({\tilde{\bf q}}-\mathbf{a}_n)^\top + ({\tilde{\bf q}}-\mathbf{a}_n){\hat{\bf e}}_j^\top) \right]}
		\label{e15}
	\vspace{-2.em}\end{equation}
\end{figure*}
\renewcommand\qedsymbol{$\blacksquare$}
\begin{proof}
	The details are relegated to Appendix \ref{appe:D_optimality}.
\end{proof}
For closely spaced antennas, the penalty gradient with respect to $\mathbf{a}_n$ is
\vspace{-0.5em}\begin{equation}
	\boldsymbol{\varrho}^{\mathrm{pen}}_n = \gamma \sum_{\substack{m \neq n\\0<\left\|\mathbf{a}_m-\mathbf{a}_n\right\|<d}} \left( d - {\left\| {\mathbf{a}_m - \mathbf{a}_n} \right\|} \right) \frac{\mathbf{a}_m - \mathbf{a}_n}{\left\| {\mathbf{a}_m - \mathbf{a}_n} \right\|}.
\vspace{-0.25em}\end{equation}
Subtracting this term from the primary-objective gradient produces the repulsive update. Thus, the total gradient of the objective function in $\mathrm{P}_{2}$ with respect to $\mathbf{a}_n$ is expressed as
\vspace{-0.5em}\begin{equation}
	\boldsymbol{\varrho}_n = \boldsymbol{\varrho}^{\mathrm{obj}}_n - \boldsymbol{\varrho}^{\mathrm{pen}}_n.
	\label{e16}
\vspace{-0.25em}\end{equation}
Letting $\boldsymbol{\varrho} = {\left[ {\boldsymbol{\varrho}}_1^\top, \ldots ,{\boldsymbol{\varrho}}_N^\top \right]^\top}$, we update the antenna position vector $\mathbf{a}$ iteratively as
\vspace{-0.5em}\begin{equation}
	{\mathbf{a}^{(i)}} = {\mathbf{a}^{(i - 1)}} + \vartheta  \cdot {\boldsymbol{\varrho}^{(i-1)}}
	\label{e17}
\vspace{-0.25em}\end{equation}
where $\vartheta$ is a fine-tuned step size.
The iteration stops when $\left| {\log \det \left( {{\mathbf{J}^{(i)}}} \right) - \log \det \left( {{\mathbf{J}^{(i - 1)}}} \right)} \right|$ in one iteration is less than the convergence threshold ${\varepsilon}$, or when the maximum number of iterations ${\mathrm{I}}_{\max}$ is reached.
The overall channel map-based PGA algorithm for antenna position optimization is summarized in Algorithm 1.

\begin{algorithm}[t!]
	\caption{Channel Map-based PGA for $\mathrm{P}_{2}$}
	\label{alg3}
	\textbf{Input:} $\mathbf{a}^{(0)}$, $\vartheta $, $\gamma$, ${\tilde{\bf q}}$, ${\mathrm{I}}_{\max}$, ${\varepsilon}$, $d$, $\mathcal{Q}$.\\
	\textbf{Output:} Optimized antenna positions $\mathbf{a}^{\star}$.\\
	Compute the objective $\log {\det(\mathbf{J}^{(0)} )}$\;
	\For{$i = 1$ to ${\mathrm{I}}_{\max}$}{
		\For{$n = 1$ to $N$}{
			Compute $\boldsymbol{\varrho}^{(i-1)}_n$ at $\mathbf{a}^{(i-1)}$ according to~\eqref{e16}\;
		}
		Update the antenna position $\mathbf{a}^{(i)}$ according to~\eqref{e17}\;
		Project the antenna position $\mathbf{a}^{(i)}$ to $\mathcal{Q}$\;
		Compute the objective $\log {\det(\mathbf{J}^{(i)} )}$\;
		\If{$ \left| {\log \det \left( {{\mathbf{J}^{(i)}}} \right) - \log \det \left( {{\mathbf{J}^{(i - 1)}}} \right)} \right| \leqslant \varepsilon $}{\textbf{break};}
	}
	Set $\mathbf{a}^{\star}=\mathbf{a}^{(i)}$.
\end{algorithm}

{With the optimized antenna positions $\mathbf{a}^{\star}$, we now rewrite the ML estimate of the UE position and the complex channel gain $\alpha$ as
\vspace{-0.5em}\begin{equation}
	( {\hat{\bf q},\hat \alpha } ) = \mathop {\arg \max }\limits_{{\bf q},\alpha }  - \frac{1}{{{\sigma ^2}}}{\left\| {\mathbf{y} - \alpha {\pmb b}({\bf q};{\bf a}^\star) } \right\|^2}.
\vspace{-0.25em}\end{equation}
Then the optimal ${\alpha}$ that maximizes this objective for a given candidate ${\bf q}$ has a closed-form solution ${\hat \alpha}({\bf q}) =  {\pmb b}^{\Htran}({\bf q};{\bf a}^\star) {\mathbf{y}}/{\left\| {\pmb b}({\bf q};{\bf a}^\star) \right\|^2}$.
By substituting this result back into \eqref{e5}, the joint optimization problem reduces to a search over ${\bf q}$ only. This is equivalent to maximizing the profiled log-likelihood function, which we denote as $\mathcal{L}\left( { {\bf q},{\mathbf{a}^ \star}} \right)$}
\vspace{-0.5em}\begin{equation}\label{eq:likelihood}
{\mathop {\max }\limits_{{\bf q}} \mathcal{L}\left( { {\bf q},{\mathbf{a}^ \star}} \right) = - \frac{1}{{{\sigma ^2}}}{\left\| {\mathbf{y} - \frac{{{\pmb b}^{\Htran}({\bf q};{\bf a}^\star)\mathbf{y}}}{{{{\left\| {\pmb b}({\bf q};{\bf a}^\star) \right\|}^2}}}{\pmb b}({\bf q};{\bf a}^\star)} \right\|^2}.}
\vspace{-0.25em}\end{equation}

The optimized antenna placement improves the likelihood geometry, enabling $\hat{\bf q}$ to be obtained by grid search.
We define a uniform $D$-dimensional grid $\mathcal{G}$ that spans the search region $[ - \eta ,\eta ]^D$ with $G$ points along each dimension.
The estimated UE position is then obtained by exhaustively searching for the point within this grid that maximizes the profiled log-likelihood function
\vspace{-0.5em}\begin{equation}
	\hat{\bf q} = \mathop {\arg \max }\limits_{{\bf q} \in \mathcal{G}} \mathcal{L}\left( { {\bf q},{\mathbf{a}^ \star}} \right).
\vspace{-0.25em}\end{equation}
By substituting this result back into the LoS model, we can obtain the LoS estimate ${\widehat{\bf h}}^{\rm L}=(\hat{\alpha}/\sqrt{\rho}){\pmb b}(\hat{\bf q};{\bf a}^\star)$.
{The PGA algorithm optimizes antenna positions rather than the UE position.
A UDM error or a suboptimal placement may reduce geometric conditioning, but the UE position is estimated afterward from fresh pilots by the LoS estimator.}

Finally, the localization performance is evaluated using the root MSE (RMSE) of the estimates of ${\bf q}$, defined as
\vspace{-0.25em}\begin{equation}
	{\sf RMSE} = \sqrt {\mathbb{E}\{ {\| {{\bf q} - {{\hat{\bf q}}}} \|^2} \}} .
\vspace{-0.25em}\end{equation}

\section{Estimation of the NLoS Component}\label{sec:NLoS estimation}

{After the estimated LoS component is subtracted from the received signal, \({\bf h}^{\rm N}\) becomes the channel component to be estimated, and its spatial structure is therefore explicitly exploited in this stage.
The residual observation can be written explicitly as
${\bf y}^{\rm N}=\sqrt{\rho}{\bf h}^{\rm N}+\sqrt{\rho}({\bf h}^{\rm L}-{\widehat{\bf h}}^{\rm L})+{\bf n}\in {\mathbb C}^{N}$,
where the second term is the residual LoS component caused by localization and channel reconstruction errors.}

Various channel estimators can be used under different levels of prior information and different accuracy-complexity requirements. Their accuracy is quantified by the normalized MSE (NMSE) ${\mathbb E}\{\|{\bf h}^{\rm N} - {\widehat{\bf h}}^{\rm N}\|^2\}/{\mathbb E}\{\|{\bf h}^{\rm N}\|^2\}$.
When the spatial correlation matrix ${\bf R}^{\rm N}$ is available at the BS, the Bayesian \emph{linear minimum MSE (MMSE)} estimator \cite{bjornson2017massive} minimizes ${\mathbb E}\{\|{\bf h}^{\rm N} - {\widehat{\bf h}}^{\rm N}\|^2\}$ and is given by~\cite[Sec. 3]{bjornson2017massive}
\vspace{-0.5em}\begin{equation}\label{eq:mmse estimate}
  {\widehat{\bf h}}^{\rm N}_{\rm MMSE} = \sqrt{\rho} {\bf R}^{\rm N} ({\rho} {\bf R}^{\rm N} + {\bf I}_N)^{-1} {\bf y}^{\rm N}. 
\vspace{-0.25em}\end{equation}
If the BS has no prior information regarding the spatial correlation matrix ${\bf R}^{\rm N}$, the non-Bayesian LS estimator can be employed to minimize $\|{\bf y}^{\rm N} - {\sqrt \rho}{\widehat{\bf h}}^{\rm N}\|^2$, which is given by ${\widehat{\bf h}}^{\rm N}_{\rm LS} = {\bf y}^{\rm N}/{\sqrt{\rho}}$.
The MMSE estimator is challenging to implement in practice because acquiring the ${N(N+1)}/{2}$ distinct entries of ${\bf R}^{\rm N}$ and inverting $({\rho} {\bf R}^{\rm N} + {\bf I}_N) \in {\mathbb C}^{N \times N}$ incur prohibitive overhead and complexity for UM-MIMO systems with large $N$.
The LS estimator is applicable for practical implementation by only utilizing the pilot SNR. However, it is sensitive to noise and overly conservative since it fails to exploit readily available information such as the MPA's antenna deployment.

The RSLS estimator \cite{demir2022channel} achieves a good trade-off between the implementation feasibility and estimation accuracy by exploiting the array geometry of the considered planar array.
More precisely, the received pilot signal ${\bf y}^{\rm N}$ is first projected onto a reduced-dimensional eigenspace of ${\bf R}^{\rm N}$ as ${{\bf U}}^{\Htran}{\bf y}^{\rm N}$.
The semi-unitary matrix ${\bf U} \in {\mathbb C}^{N\times r}$ contains the orthonormal eigenvectors in the compact EVD ${\bf R}^{\rm N} = {\bf U}{\bf \Lambda}{\bf U}^{\Htran}$, where $r={\rm Rank}({\bf R}^{\rm N})\ll N$ and the diagonal matrix ${\bf \Lambda}\in {\mathbb C}^{r\times r}$ contains the non-negligible eigenvalues.
Then the channel estimate is obtained via the LS approach and projected back into the original domain, which is given by
\vspace{-0.5em}\begin{equation}\label{eq:rsls estimate}
  {\widehat{\bf h}}^{\rm N}_{\rm RSLS} = \frac{{{\bf U}}{{\bf U}}^{\Htran}}{\sqrt{\rho}}{\bf y}^{\rm N}.
\vspace{-0.25em}\end{equation}
{The RSLS estimator in \eqref{eq:rsls estimate} projects \({\bf y}^{\rm N}\) onto the adopted NLoS subspace and approaches the MMSE estimator as \(\rho\rightarrow\infty\)~\cite{demir2022channel}.
Under imperfect LoS cancellation, the same projection attenuates the out-of-subspace residual LoS component, so its downstream impact is determined by the residual energy retained within the subspace.}
Since the real ${\bf R}^{\rm N}$ is hard to obtain, one can alternatively utilize a tractable ${\bar{\bf R}}^{\rm N}$ with only a one-dimensional integral for RSLS estimation~\cite{demir2024spatial}.
The eigenspace of ${\bar{\bf R}}^{\rm N}$ contains that of ${\bf R}^{\rm N}$.
In this case, matrix $\bf U$ in~\eqref{eq:rsls estimate} is replaced by ${\bar{\bf U}}\!\in\! {\mathbb C}^{N\times {\bar r}}$, which is the eigenvector matrix 
of ${\bar{\bf R}}^{\rm N}$ with ${\bar r} = {\rm Rank}({\bar{\bf R}}^{\rm N})$.
Nevertheless, this approach turns out to be conservative if the positions of scatterers $\{{\bf p}_l:\forall l\}$ are available.

\vspace{0.em}
\subsection{Sketch Algorithm-Based RSLS Channel Estimation}\label{subsec:sa rsls}

A critical step in the aforementioned RSLS estimator is the acquisition of the signal subspace matrix $\bf U$ (or ${\bar {\bf U}}$), which relies on a compact EVD of the $N \times N$ spatial correlation matrix ${\bf R}^{\rm N}$ (or ${\bar{\bf R}}^{\rm N}$), with computational complexity of ${\cal O}(N^3)$.
This cubic complexity becomes prohibitive in UM-MIMO scenarios (e.g., $N = 1024$), exceeding the real-time capability of practical hardware.

To address this bottleneck, we propose a SA-RSLS channel estimator, which embeds the representative sketching algorithm \cite{martinsson2020randomized} into the RSLS framework.
The core idea of the SA-RSLS estimator is to project the high-dimensional low-rank ${\bf R}^{\rm N}$ onto a low-dimensional subspace via random sketching, thereby converting the full-dimensional EVD into a low-complexity EVD of a small projection matrix.
As a result, its complexity is reduced from ${\cal O}(N^3)$ to ${\cal O}(NN_{\rm s}{\tilde r})$, where $N_{\rm s} = {{\tilde r}+s}$ with ${\tilde r}$ being the sketch size and $s \in [5,10]$ being the oversampling factor used to bound the approximation error.
Both ${\tilde r}$ and $N_{\rm s}$ are much smaller than $N$, making it suitable for UM-MIMO.

The proposed SA-RSLS estimator operates via two core phases, i.e., sketch-based subspace acquisition and reduced-space channel estimation. Detailed steps are as follows:
\begin{enumerate}
  \item Random Sketch Generation. Construct matrix ${\bf \Omega} \in {\mathbb C}^{N \times N_{\rm s}}$ randomly with each entry $[{\bf \Omega}]_{ij} \sim {\cal N}_{\mathbb C}(0,1)$, $\forall i,j$.
      Generate sketch ${\bf S} = {\bf R}^{\rm N}{\bf \Omega} \in {\mathbb C}^{N \times N_{\rm s}}$, whose column space approximates that of ${\bf R}^{\rm N}$.
  \item Low-Dimensional Basis Extraction. Perform compact QR decomposition (QRD) on ${\bf S}$ as ${\bf S} = {{\bf Q}_{\rm s}}{{\bf\Delta}_{\rm s}}$, where ${{\bf Q}_{\rm s}} \in {\mathbb C}^{N \times N_{\rm s}}$ with ${{\bf Q}_{\rm s}}^{\Htran} {{\bf Q}}_{\rm s} = {\bf I}_{N_{\rm s}}$ and ${{\bf\Delta}_{\rm s}} \in {\mathbb C}^{N_{\rm s} \times N_{\rm s}}$ is an upper triangular matrix.
      Matrix ${{\bf Q}_{\rm s}}$ approximates the signal subspace of ${\bf R}^{\rm N}$ with minimal overhead.
  \item Low-Dimensional EVD. Project ${\bf R}^{\rm N}$ onto the subspace of ${{\bf Q}_{\rm s}}$ to get ${\bf X} = {{\bf Q}}_{\rm s}^{\Htran}{\bf R}^{\rm N}{{\bf Q}}_{\rm s} \in {\mathbb C}^{N_{\rm s} \times N_{\rm s}}$. Perform compact EVD on ${\bf X}$ as ${\bf X} = {\bf U}_{\rm x} {\bf \Lambda}_{\rm x} {\bf U}_{\rm x}^{\Htran}$, where ${\bf U}_{\rm x} \in {\mathbb C}^{N_{\rm s}\times {\tilde r}}$ and ${\bf \Lambda}_{\rm x} \in {\mathbb C}^{{\tilde r} \times {\tilde r}}$.
      This compact EVD is far less complex than decomposing ${\bf R}^{\rm N}$ directly.
  \item Subspace Recovery. Recover ${\bf U}_{\rm s} = {\bf Q}_{\rm s} {\bf U}_{\rm x}\in {\mathbb C}^{N \times {\tilde r}}$ to approximate the signal subspace of ${\bf R}^{\rm N}$.
  \item Reduced-Space Estimation: Project ${\bf y}^{\rm N}$ onto ${\bf U}_{\rm s}$ as ${{\bf U}}_{\rm s}^{\Htran}{\bf y}^{\rm N}$.
      Apply LS estimation and project back into the original domain, which is given by
        \vspace{-0.5em}\begin{equation}\label{eq:sa_rsls estimate}
          {\widehat{\bf h}}^{\rm N}_{\rm SA-RSLS} = \frac{{{\bf U}_{\rm s}}{{\bf U}}_{\rm s}^{\Htran}}{\sqrt{\rho}}{\bf y}^{\rm N}.
        \vspace{-0.25em}\end{equation}
\end{enumerate}
{The interpretation of residual LoS following \eqref{eq:rsls estimate} also applies to SA-RSLS, with ${\bf U}$ replaced by ${\bf U}_{\rm s}$.

Since the sketch is formed from ${\bf R}^{\rm N}$ rather than ${\bf y}^{\rm N}$, low SNR does not directly perturb ${\bf U}_{\rm s}$ when ${\bf R}^{\rm N}$ is available, but increases the noise retained after projection.
The accuracy and complexity are mainly controlled by ${\tilde r}$ and $s$.
More precisely, choosing ${\tilde r}$ below the effective channel rank causes projection loss, an unnecessarily large ${\tilde r}$ retains additional noise, and increasing $s$ improves subspace capture under weak spectral separation at additional cost~\cite{martinsson2020randomized}.
With $N_{\rm s}={\tilde r}+s$, the dominant matrix multiplications scale as ${\cal O}(NN_{\rm s}{\tilde r})$.
For $N=1024$, ${\tilde r}=10$, and $N_{\rm s}=18$, this corresponds to approximately ${\cal O}(1.8\times10^5)$ operations, three orders of magnitude lower than the ${\cal O}(1.1\times10^9)$ complexity of full-dimensional RSLS.}
\vspace{-1em}
\subsection{Channel Map-Based RSLS Channel Estimation}

To further improve the estimation accuracy and reduce the overhead and computational complexity, we propose a CM-RSLS channel estimator.
A channel map, namely SDM, is proposed to provide the prior positions of scatterers $\{{\hat{\bf p}}_l:\forall l\}$ based on the environment database $\cal E$, which is expressed as
\vspace{-0.5em}\begin{equation}
{\cal M}_{\rm SDM}:\quad
\{{\hat{\bf p}}_l:\forall l\} = f_{\rm SDM}({\cal E})
\vspace{-0.25em}\end{equation}
and constructed via approaches like channel measurements.
{Due to the imperfection of the ${\cal M}_{\rm SDM}$, the provided location of scatterer $l$ is given by
\vspace{-0.5em}\begin{equation}
{\hat{\bf p}}_l = {\bf p}_l + {\boldsymbol \delta}_{{\rm SDM},l}
\vspace{-0.25em}\end{equation}
where ${\boldsymbol \delta}_{{\rm SDM},l} = \epsilon_{\rm SDM}({\boldsymbol\zeta}_{{\rm SDM},l}\odot{\bf p}_l)$ is the deviation, the entries of ${\boldsymbol\zeta}_{{\rm SDM},l}$ are independently distributed as ${\cal U}(-1/2,1/2)$, and $\epsilon_{\rm SDM}$ is the error level of ${\cal M}_{\rm SDM}$.
Using the true scatterer locations, the physical NLoS component ${\bf h}^{\rm N}$ in \eqref{eq:hNLoS} can be expressed as a discrete superposition of multipaths,
\vspace{-0.5em}\begin{equation}\label{eq:h1}
  {\bf h}^{\rm N} = \sum\nolimits_{l=1}^L g_l {\bf b}({\varphi}_l,{\theta}_l,{r}_l)
\vspace{-0.25em}\end{equation}
where $({\varphi}_l,{\theta}_l,{r}_l)$ corresponds to ${\bf p}_l$ and $g_l \sim {\cal N}_{\mathbb C}(0, \beta_l^{\rm N})$ indicates the gain and phase shift of the multipath from scatterer $l$, with $\sum_{l=1}^L \beta_l^{\rm N} = \beta^{\rm N}$.

Using the SDM-provided locations $({\hat\varphi}_l,{\hat\theta}_l,{\hat r}_l)$, we construct a map-derived subspace intended to approximate the eigenspace of the actual spatial correlation matrix ${\bf R}^{\rm N}$.}
\begin{thm}\label{theo:decomposition}
{Let ${\bf U}^{\prime}=[{\bf b}(\varphi_1,\theta_1,r_1),\ldots,{\bf b}(\varphi_L,\theta_L,r_L)]$ denote the response matrix of the true scatterer locations and let ${\hat{\bf U}}=[{\bf b}({\hat\varphi}_1,{\hat\theta}_1,{\hat r}_1),\ldots,{\bf b}({\hat\varphi}_L,{\hat\theta}_L,{\hat r}_L)]={\bf Q}{\bf \Delta}$ denote the compact QRD of the SDM-derived response matrix, where ${\bf Q}\in {\mathbb C}^{N\times L}$ contains orthonormal columns and ${\bf \Delta}\in {\mathbb C}^{L\times L}$ is nonsingular.
Under an accurate SDM and the large array condition with $N\rightarrow\infty$, we have}
\vspace{-0.5em}\begin{equation}\label{eq:cm_projector}
{\bf Q} {\bf Q}^{\Htran} \rightarrow {\bf U} {\bf U}^{\Htran}.
\vspace{-0.25em}\end{equation}
\end{thm}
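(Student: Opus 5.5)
The proof splits into two steps: first show that, for the discrete multipath model \eqref{eq:h1}, the eigenspace of ${\bf R}^{\rm N}$ is exactly the column space of ${\bf U}^{\prime}$; then show that, under an accurate SDM, the column space of $\hat{\bf U}$ (hence that of ${\bf Q}$) coincides with or converges to it. The limit $N\to\infty$ is used only to guarantee that ${\bf U}^{\prime}$ and $\hat{\bf U}$ have full column rank $L$, so that the compact QRD and the $L$-dimensional projectors are well defined and stable.

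\emph{Step 1.} From \eqref{eq:h1} with independent $g_l\sim{\cal N}_{\mathbb C}(0,\beta_l^{\rm N})$ we get
\[
{\bf R}^{\rm N}={\mathbb E}\{{\bf h}^{\rm N}({\bf h}^{\rm N})^{\Htran}\}={\bf U}^{\prime}{\bf D}({\bf U}^{\prime})^{\Htran},\qquad {\bf D}=\diag(\beta_1^{\rm N},\ldots,\beta_L^{\rm N})\succ{\bf 0}.
\]
Hence ${\rm range}({\bf R}^{\rm N})={\rm range}({\bf U}^{\prime})$ whenever ${\bf U}^{\prime}$ has full column rank. In that case $r=L$, and the compact EVD satisfies ${\bf U}{\bf U}^{\Htran}={\bf U}^{\prime}(({\bf U}^{\prime})^{\Htran}{\bf U}^{\prime})^{-1}({\bf U}^{\prime})^{\Htran}$, which is the orthogonal projector onto ${\rm range}({\bf U}^{\prime})$.

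\emph{Step 2.} To justify full rank I would study the normalized Gram matrix $\frac1N({\bf U}^{\prime})^{\Htran}{\bf U}^{\prime}$. Its diagonal entries equal $1$, since the responses are unit-modulus. Its off-diagonal entries are
\[
\frac1N\sum_{n=1}^N e^{\jmath\chi(d_{l,n}-d_{k,n})}\,e^{-\jmath\chi(r_l-r_k)}.
\]
Using the expansion \eqref{eq:dn}, the phase difference is a nonconstant function of ${\bf a}_n$ (linear plus quadratic in the aperture) whenever ${\bf p}_l\neq{\bf p}_k$. This holds even for scatterers sharing the same angle, because they then differ in the $1/r$ curvature term. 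As the aperture grows with $N$, this is an oscillatory (Fresnel-type) sum that tends to $0$, as in standard near-field asymptotic-orthogonality arguments. So $\frac1N({\bf U}^{\prime})^{\Htran}{\bf U}^{\prime}\to{\bf I}_L$, and the same conclusion holds for $\hat{\bf U}$. Both matrices therefore have rank $L$ for large $N$, ${\bf \Delta}$ is nonsingular, and ${\bf Q}{\bf Q}^{\Htran}=\hat{\bf U}(\hat{\bf U}^{\Htran}\hat{\bf U})^{-1}\hat{\bf U}^{\Htran}$.

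\emph{Step 3.} I interpret ``accurate SDM'' as $\epsilon_{\rm SDM}\to0$, i.e., $\hat{\bf p}_l\to{\bf p}_l$. Then $\hat{\bf U}\to{\bf U}^{\prime}$ by continuity of ${\bf b}$ in its arguments. Writing both projectors in normalized form, e.g. $\hat{\bf U}(\frac1N\hat{\bf U}^{\Htran}\hat{\bf U})^{-1}\frac1N\hat{\bf U}^{\Htran}$, continuity of matrix inversion at the well-conditioned limit ${\bf I}_L$ gives ${\bf Q}{\bf Q}^{\Htran}\to{\bf U}^{\prime}(({\bf U}^{\prime})^{\Htran}{\bf U}^{\prime})^{-1}({\bf U}^{\prime})^{\Htran}={\bf U}{\bf U}^{\Htran}$. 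A perturbation bound of Davis--Kahan/Wedin type, $\|{\bf Q}{\bf Q}^{\Htran}-{\bf U}{\bf U}^{\Htran}\|\le C\|\hat{\bf U}-{\bf U}^{\prime}\|/\sigma_{\min}({\bf U}^{\prime})$, makes this quantitative.

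\emph{Main obstacle.} The hard part is the interplay between $N\to\infty$ and the SDM error. Because $\chi$ is large, each phase error $\chi(\hat d_{l,n}-d_{l,n})$ can grow with the aperture, so $\|\hat{\bf U}-{\bf U}^{\prime}\|$ need not vanish uniformly in $N$. I would handle this by making the order of limits explicit: take the SDM error to zero (or assume it is $o(\lambda)$ in the path-length differences across the array) for each fixed large $N$. With the normalization above, $\sigma_{\min}(\frac{1}{\sqrt N}{\bf U}^{\prime})\to1$, so the Wedin bound is uniform in $N$ provided the per-entry phase error is uniformly small. I would state that assumption as the precise meaning of ``accurate SDM''.
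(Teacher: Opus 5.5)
Your argument is correct, but it takes a different route from the paper.

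\textbf{The paper's route.} The paper sets $\hat{\bf U}={\bf U}^{\prime}$ and invokes asymptotic orthogonality of the response vectors to argue that ${\bf U}^{\prime{\Htran}}{\bf U}^{\prime}$ and ${\bf \Delta}$ become diagonal. It then writes ${\bf R}^{\rm N}={\bf Q}{\bf \Delta}\diag(\beta^{\rm N}_1,\ldots,\beta^{\rm N}_L){\bf \Delta}^{\Htran}{\bf Q}^{\Htran}$ and concludes that the inner factor tends to a diagonal ${\bf \Lambda}$. In other words, it identifies the columns of ${\bf Q}$ themselves as asymptotic eigenvectors of ${\bf R}^{\rm N}$.

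\textbf{Your route.} You work with basis-free orthogonal projectors. Since ${\bf D}\succ{\bf 0}$, ${\rm range}({\bf R}^{\rm N})={\rm range}({\bf U}^{\prime})$. Hence ${\bf U}{\bf U}^{\Htran}$ and ${\bf Q}{\bf Q}^{\Htran}$ project onto the same subspace whenever $\hat{\bf U}={\bf U}^{\prime}$ has full column rank.

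\textbf{What your route buys.} First, under the paper's own reading of an accurate SDM, the identity ${\bf Q}{\bf Q}^{\Htran}={\bf U}{\bf U}^{\Htran}$ holds exactly for every $N$. Full column rank is already implied by the nonsingularity of ${\bf \Delta}$ assumed in the statement. So asymptotic orthogonality is only needed for conditioning that is uniform in $N$, not for the identity itself.

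Second, it extends to small SDM errors through the normalized perturbation bound. Here your order-of-limits caveat is well founded. By the same orthogonality heuristic the paper uses, ${\bf b}^{\Htran}(\hat\varphi_l,\hat\theta_l,\hat r_l){\bf b}(\varphi_l,\theta_l,r_l)/N\to 0$ whenever $\hat{\bf p}_l\neq{\bf p}_l$. So for any fixed $\epsilon_{\rm SDM}>0$, the map-derived subspace would decorrelate from the true one as $N\to\infty$. The paper's proof sidesteps this by assuming $\hat{\bf U}={\bf U}^{\prime}$ outright. Your requirement that the per-entry phase error be uniformly small is the right way to make ``accurate SDM'' precise.

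\textbf{What the paper's route buys.} It gives spectral information that yours does not: ${\bf \Delta}\diag(\beta^{\rm N}_1,\ldots,\beta^{\rm N}_L){\bf \Delta}^{\Htran}\approx N\diag(\beta^{\rm N}_1,\ldots,\beta^{\rm N}_L)$. That is, the nonzero eigenvalues are asymptotically $N\beta^{\rm N}_l$. This is what ensures all $L$ directions remain \emph{non-negligible} if ${\bf U}$ is understood to retain only significant eigenvalues rather than the full range of ${\bf R}^{\rm N}$.
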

\begin{IEEEproof}
The details are relegated to Appendix \ref{appe:decomposition}.
\end{IEEEproof}
Then, the CM-RSLS estimator can be obtained as follows.
\begin{coro}\label{coro:CMLS}
A CM-RSLS estimator of the NLoS component ${\bf h}^{\rm N}$ with $\{{\hat{\bf p}}_l:\forall l\}$ provided by the ${\cal M}_{\rm SDM}$ is
\vspace{-0.5em}\begin{equation}\label{eq:cm_rsls_estimate}
{\widehat{\bf h}}^{\rm N}_{\rm CM-RSLS} = \frac{{\bf Q} {\bf Q}^{\Htran}}{\sqrt{\rho}}{\bf y}^{\rm N}
\vspace{-0.25em}\end{equation}
Under the accurate SDM condition in Theorem~\ref{theo:decomposition}, this estimate approaches the RSLS estimate ${\widehat{\bf h}}^{\rm N}_{\rm RSLS}$ with perfect knowledge of ${\bf R}^{\rm N}$ in \eqref{eq:rsls estimate} as $N \rightarrow \infty$.
\end{coro}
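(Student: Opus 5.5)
The plan is to reduce the corollary to Theorem~\ref{theo:decomposition} by showing that the two estimators differ only through their projectors. Both estimators have the same form: a projector applied to the same residual observation ${\bf y}^{\rm N}$, scaled by $1/\sqrt{\rho}$. Subtracting gives
\begin{equation*}
{\widehat{\bf h}}^{\rm N}_{\rm CM-RSLS}-{\widehat{\bf h}}^{\rm N}_{\rm RSLS}=\frac{1}{\sqrt{\rho}}\left({\bf Q}{\bf Q}^{\Htran}-{\bf U}{\bf U}^{\Htran}\right){\bf y}^{\rm N},
\end{equation*}
so the whole question is whether this difference vanishes as $N\rightarrow\infty$.

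First, I will check that ${\bf Q}{\bf Q}^{\Htran}$ is a genuine orthogonal projector. This holds because the compact QRD gives ${\bf Q}^{\Htran}{\bf Q}={\bf I}_L$. Since ${\bf \Delta}$ is nonsingular, the projector ${\bf Q}{\bf Q}^{\Htran}$ equals $\hat{\bf U}(\hat{\bf U}^{\Htran}\hat{\bf U})^{-1}\hat{\bf U}^{\Htran}$, the projector onto the span of the SDM-derived responses. Under the accurate-SDM condition $\epsilon_{\rm SDM}\rightarrow 0$, we have $\hat{\bf U}={\bf U}^{\prime}$. From the discrete model \eqref{eq:h1}, ${\bf R}^{\rm N}={\bf U}^{\prime}\diag(\beta^{\rm N}_1,\ldots,\beta^{\rm N}_L){\bf U}^{\prime\Htran}$, whose range is that of ${\bf U}^{\prime}$. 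That range is the column space of ${\bf U}$, provided all $\beta_l^{\rm N}>0$ and the responses are linearly independent.

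Second, I will invoke Theorem~\ref{theo:decomposition} for the convergence ${\bf Q}{\bf Q}^{\Htran}\rightarrow{\bf U}{\bf U}^{\Htran}$ and bound the estimator gap. The cleanest form is the normalized estimation gap, which I bound as
\begin{equation*}
\frac{\mathbb{E}\{\|({\bf Q}{\bf Q}^{\Htran}-{\bf U}{\bf U}^{\Htran}){\bf y}^{\rm N}\|^2\}}{\rho N}\le \|{\bf Q}{\bf Q}^{\Htran}-{\bf U}{\bf U}^{\Htran}\|_2^2\,\frac{\mathbb{E}\{\|{\bf y}^{\rm N}\|^2\}}{\rho N}.
\end{equation*}
The last factor is bounded by $\beta^{\rm N}+{\rm residual\ LoS\ power}+1/\rho$. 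The gap relative to the channel energy therefore goes to zero, and so does the NMSE difference.

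The main obstacle is the normalization. Since $\|{\bf y}^{\rm N}\|^2$ grows like $N$, I must use the spectral-norm convergence of the projectors delivered by the theorem, not merely entrywise convergence. If the theorem gives only asymptotic orthogonality of the distinct near-field responses, i.e. $\frac{1}{N}{\bf U}^{\prime\Htran}{\bf U}^{\prime}\rightarrow{\bf I}_L$, then I would use a Davis--Kahan/perturbation argument on the $L$-dimensional Gram matrix. That argument transfers the convergence to spectral norm, which is enough because the rank $L$ is fixed.
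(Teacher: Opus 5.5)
Your proposal is correct and follows essentially the paper's route: both reduce the corollary to Theorem~\ref{theo:decomposition}, since the two estimators differ only in the projector applied to ${\bf y}^{\rm N}$, and your normalized spectral-norm bound just makes ``approaches'' precise. Your first step already gives ${\bf Q}{\bf Q}^{\Htran}={\bf U}{\bf U}^{\Htran}$ exactly when $\hat{\bf U}={\bf U}^{\prime}$ (this needs $\epsilon_{\rm SDM}=0$, not merely $\epsilon_{\rm SDM}\rightarrow 0$), ${\bf U}^{\prime}$ has full column rank, and all $\beta_l^{\rm N}>0$, so the Davis--Kahan fallback is unnecessary.
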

\begin{IEEEproof}
The proof follows a similar approach as in \cite{demir2022channel} but with the subspace spanned by ${\bf Q}$.
\end{IEEEproof}
{It is worth noting that the CM-RSLS estimator constructs \({\bf Q}\) from only the \(3L\) SDM-provided scatterer coordinate entries through a compact QR decomposition with complexity \({\cal O}(NL^2)\).
When the dominant scatterer responses are well represented by the map-derived subspace, the CM-RSLS estimator can approach the accuracy of the RSLS estimator in \eqref{eq:rsls estimate} with substantially lower overhead and computational complexity.
Scatterer location errors, missing or spurious scatterers, and unrecorded environmental changes such as temporary blockage mainly affect performance by causing subspace mismatch, which may lead to projection loss or additional projected noise. Until the SDM is refreshed, LS can serve as a fallback without using the map.}

\section{An Extension to the Multi-UE Scenarios}\label{sec:NLoS estimation MU}

{UM-MIMO systems are inherently designed to serve multiple UEs simultaneously, where pilot reuse can cause pilot contamination and degrade channel estimation accuracy.
Following the sequential framework, we first estimate and subtract the LoS component of each UE using the single-UE method in Section~\ref{sec:LoS estimation}, and then extend the proposed SA-RSLS and CM-RSLS estimators to estimate the remaining NLoS components in multi-UE scenarios.
Specifically, we develop the multi-UE SA-RSLS (MUSA-RSLS) and multi-UE CM-RSLS (MUCM-RSLS) estimators, analyze their robustness to pilot contamination, and design a pilot assignment scheme that exploits the spatial distribution characteristics of scatterers among UEs.}

\begin{figure}[t!]
\centering
\includegraphics[scale=1]{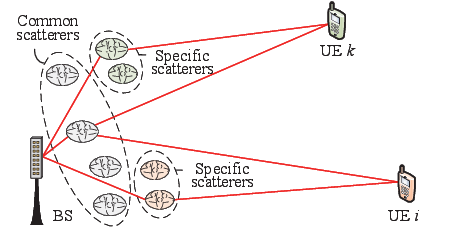}
\vspace{0em}
\caption{Illustration of the shared and exclusive scatterers for multiple UEs. 
\label{fig:2_system}}\vspace{-1.5em}
\end{figure}

\vspace{-1.em}
\subsection{Channel Model Considering the Visibility Limitation}
We consider a multi-UE UM-MIMO system where a BS serves $K$ UEs with an MPA.
Let ${\bf q}_k$ and ${\hat{\bf q}}_k$ denote the true and estimated positions of UE $k$, respectively.
As illustrated in Fig.~\ref{fig:2_system}, the scatterers are categorized into common and specific ones based on their contributions to the UEs' channel multipaths.
This classification is rooted in the {\it visibility limitation} of scatterers, which is determined by the LoS conditions between the scatterer, the BS, and each UE \cite{3GPPTR38901}.
A scatterer is only ``visible" to a UE if it can reflect that UE's transmitted signal to the BS, forming a valid NLoS multipath component.
This spatial non-stationary phenomenon of UM-MIMO coincides with the concept of the visibility region \cite{wang2021general}.
The {\it common scatterers} are visible to all $K$ UEs, typically located in open areas or central positions of the coverage region, such as public squares or building facades with wide reflection angles.
The {\it specific scatterers} for UE $k$ are only visible to UE $k$ and often include UE-specific obstacles or local structures, such as walls adjacent to a stationary UE or nearby foliage for a mobile UE.

We denote by ${\cal S}^{\rm co}$ and ${\cal S}^{\rm sp}_k$ the subsets of the common scatterers and the specific scatterers for UE $k$, respectively.
For $k \neq i$, we have ${\cal S}^{\rm sp}_k \cap {\cal S}^{\rm sp}_i = \emptyset$ to ensure exclusivity, as overlapping specific scatterers would inherently belong to the common category.
We let ${\bf h}_{k}^{\rm N} \in {\mathbb C}^{N}$ denote the NLoS channel vector between UE $k$ and the BS, which is given by
\vspace{-0.5em}\begin{equation}\label{eq:h2}
  {\bf h}_k^{\rm N} = \underbrace{\sum\nolimits_{l\in{\cal S}^{\rm co}} g_{kl} {\bf b}({\varphi}_l,{\theta}_l,{ r}_l)}_{\text{Common Component}} + \underbrace{\sum\nolimits_{j\in{\cal S}_k^{\rm sp}} g_{kj} {\bf b}({\varphi}_j,{\theta}_j,{ r}_j)}_{\text{Specific Component}}
\vspace{-0.25em}\end{equation}
by rewriting the expression in \eqref{eq:h1} as a superposition of the multipaths contributed by common and specific scatterers, where $g_{kl} \sim {\cal N}_{\mathbb C}(0, \beta_{kl}^{\rm N})$ with $\sum_{l=1}^L \beta_{kl}^{\rm N} = \beta_{k}^{\rm N}$ and $\beta_{k}^{\rm N}$ is the NLoS LSF coefficient of ${\bf h}_k^{\rm N}$.
A key structural property of \eqref{eq:h2} is the subspace orthogonality between specific components of different UEs, i.e., ${\bf b}^{\Htran}({\varphi}_j,{\theta}_j,{ r}_j){\bf b}({\varphi}_\ell,{\theta}_\ell,{ r}_\ell)\approx 0$, $\forall j\in{\cal S}_k^{\rm sp}$ and $\forall \ell \in{\cal S}_i^{\rm sp}$ with any $k \ne i$.
This orthogonality stems from the distinct spatial locations of specific scatterers, which result in non-overlapping AoA clusters for different UEs. In contrast, the array response vectors of common scatterers are identical across UEs, leading to overlapping subspace components between UEs.

For the convenience of the subsequent estimation, we rewrite the channel vector in \eqref{eq:h2} in a low-dimensional subspace form as
\vspace{-0.5em}\begin{equation}\label{eq:h3}
{\bf h}_k^{\rm N} = {\bf U}_k {\bf \Lambda}_k^{\frac{1}{2}} {\bf e}_k = {\bf U}_k {\bf g}_k  = {\bf U}^{\rm co} {\bf g}_k^{\rm co} + {\bf U}_k^{\rm sp} {\bf g}_k^{\rm sp}
\vspace{-0.25em}\end{equation}
by performing the Karhunen-Lo{\`e}ve expansion on ${\bf h}_k^{\rm N}$, where ${\bf g}_k = {\bf \Lambda}_k^{\frac{1}{2}} {\bf e}_k \sim {\cal N}_{\mathbb C}({\bf 0},{\bf \Lambda}_k)$ with ${\bf e}_k \sim {\cal N}_{\mathbb C}({\bf 0},{\bf I}_{r_k})$.
Matrices ${\bf U}_{k} \in {\mathbb C}^{N\times r_k}$ and ${\bf \Lambda}_{k} \in {\mathbb C}^{r_k \times r_k}$ are obtained by performing compact EVD on ${\bf R}_k^{\rm N}$ as ${\bf R}_k^{\rm N} = {\bf U}_{k} {\bf \Lambda}_{k} {\bf U}_{k}^{\Htran}$, where ${\bf R}_k^{\rm N} = {\mathbb E}\{{\bf h}_k^{\rm N}({\bf h}_k^{\rm N})^{\Htran}\}$ is the spatial correlation matrix of ${\bf h}_k^{\rm N}$ and $r_k={\rm Rank}({\bf R}_k^{\rm N})\ll  N$ is the rank of ${\bf R}_k^{\rm N}$.
Matrices ${\bf U}^{\rm co}\in {\mathbb C}^{N\times r^{\rm co}}$ and ${\bf U}_k^{\rm sp}\in {\mathbb C}^{N\times r_k^{\rm sp}}$ are the signal subspace matrices corresponding to the common and specific scatterers, respectively.
Vectors ${\bf g}_k^{\rm co} \in {\mathbb C}^{r^{\rm co}}$ and ${\bf g}_k^{\rm sp} \in {\mathbb C}^{r_k^{\rm sp}}$ are the low-dimensional coefficient vectors with ${\mathbb E}\{{\bf g}_k^{\rm co} ({\bf g}_k^{\rm co})^{\Htran}\} = {\bf \Lambda}_k^{\rm co}$ and ${\mathbb E}\{{\bf g}_k^{\rm sp} ({\bf g}_k^{\rm sp})^{\Htran}\} = {\bf \Lambda}_k^{\rm sp}$, respectively.
Here, $r^{\rm co}={\rm Rank}({\bf U}^{\rm co})$ and $r_k^{\rm sp}={\rm Rank}({\bf U}_k^{\rm sp})$ satisfying $r^{\rm co} + r_k^{\rm sp} = r_k$.
Moreover, ${\rm blkdiag}({\bf \Lambda}_k^{\rm co},{\bf \Lambda}_k^{\rm sp}) = {\bf \Lambda}_k$.

\vspace{0em}
\subsection{Multi-UE RSLS and CM-RSLS Channel Estimation}

During the pilot transmission phase, $K$ UEs are first assigned their pilot sequences from an orthogonal pilot set of cardinality $\tau_{\rm p} \ll K$.
Then, the received pilot signal at the BS in \eqref{eq:received signal at AP} becomes \cite[Sec.3]{bjornson2017massive}
\vspace{-0.5em}\begin{equation}\label{eq:received signal at AP MU}
{\bf y}_{k}^{\rm N} = \sqrt{\rho} \sum\nolimits_{i\in{\cal P}_{t_k}} {\bf h}_{i}^{\rm N} + {\bf n}_k\in {\mathbb C}^{N}
\vspace{-0.25em}\end{equation}
due to the pilot reuse among the UEs in ${\cal P}_{t_k}$, where $t_k$ is the index of the pilot assigned to UE $k$, ${\cal P}_{t_k}$ is the set of UEs sharing pilot $t_k$, and ${\bf n}_k \sim {\cal N}_{\mathbb C}({\bf 0}, {\bf I}_N)$ is the receiver noise.
{It is worth noting that \eqref{eq:received signal at AP MU} uses accurate LoS cancellation as a reference case.
With imperfect cancellation, ${\bf y}_{k}^{\rm N}$ additionally contains $\sqrt{\rho}\sum\nolimits_{i\in{\cal P}_{t_k}}({\bf h}_{i}^{\rm L}- {\widehat{\bf h}}_{i}^{\rm L})$, whose projection contributes residual interference to the multi-UE estimates.}
The BS projects ${\bf y}_{k}^{\rm N}$ onto ${\bf U}_{k}$ as ${\bf y}_{k}^{\rm N,p} = {\bf U}_{k}^{\Htran}{\bf y}_{k}^{\rm N}$.
By exploiting \eqref{eq:h3} and the subspace orthogonality, we have
\vspace{-0.5em}\begin{equation}\label{eq:received signal at AP MU1}
{\bf y}_{k}^{\rm N,p} = \sqrt{\rho} {\bf g}_{k} + \sqrt{\rho} \sum\nolimits_{i\in{\cal P}_{t_k},i\ne k} {\bf U}_{k}^{\Htran}{\bf U}^{\rm co}{\bf g}_{i}^{\rm co} + {\bf U}_{k}^{\Htran}{\bf n}_k\in {\mathbb C}^{r_k}
\vspace{-0.25em}\end{equation}
where the pilot contamination is caused by the residual common components of the pilot-sharing UEs.
Sequentially, a multi-UE RSLS (MU-RSLS) estimate is obtained by applying the LS estimator to minimize $\|{\bf y}_{k}^{\rm N,p} - \sqrt{\rho} {\bf g}_{k}\|^2$ and then projecting the estimate back into ${\bf U}_{k}$.
\begin{lemm}\label{coro:MURSLS}
A MU-RSLS estimator of the NLoS component ${\bf h}_k^{\rm N}$ considering pilot contamination is
\vspace{-0.5em}\begin{equation}\label{eq:mursls estimate}
\begin{aligned}
&{\widehat{\bf h}}^{\rm N}_{{\rm MU-RSLS},k} \!=\! \frac{{\bf U}_k {\bf U}_k^{\Htran}}{\sqrt{\rho}}{\bf y}_{k}^{\rm N}\\
&\!=\! {\bf U}_k{\bf g}_{k} + \sum\nolimits_{i\in{\cal P}_{t_k},i\ne k} {\bf U}^{\rm co} {\bf g}_{i}^{\rm co} +\frac{{\bf U}_k {\bf U}_k^{\Htran}}{\sqrt{\rho}}{\bf n}_k.
\end{aligned}
\vspace{-0.25em}\end{equation}
\end{lemm}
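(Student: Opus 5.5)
The plan is to substitute the received signal \eqref{eq:received signal at AP MU} into the projected LS estimator and simplify each term using the subspace decomposition \eqref{eq:h3} and the orthonormality of ${\bf U}_k$. There are two steps. First, obtain the estimator form. Minimizing $\|{\bf y}_{k}^{\rm N,p}-\sqrt{\rho}{\bf g}_k\|^2$ over ${\bf g}_k\in{\mathbb C}^{r_k}$ is an unconstrained LS problem with identity regressor. Its solution is therefore $\widehat{\bf g}_k={\bf y}_{k}^{\rm N,p}/\sqrt{\rho}={\bf U}_k^{\Htran}{\bf y}_k^{\rm N}/\sqrt{\rho}$. Mapping back via ${\bf h}_k^{\rm N}={\bf U}_k{\bf g}_k$ gives the first line of \eqref{eq:mursls estimate}, in direct analogy with \eqref{eq:rsls estimate}.

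Second, expand the estimate. Substituting ${\bf y}_k^{\rm N}=\sqrt{\rho}{\bf h}_k^{\rm N}+\sqrt{\rho}\sum_{i\ne k}{\bf h}_i^{\rm N}+{\bf n}_k$ produces three contributions, which I would handle one at a time.

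\emph{Desired term.} Since ${\bf U}_k^{\Htran}{\bf U}_k={\bf I}_{r_k}$, we get ${\bf U}_k{\bf U}_k^{\Htran}{\bf U}_k{\bf g}_k={\bf U}_k{\bf g}_k$.

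\emph{Noise term.} This is simply ${\bf U}_k{\bf U}_k^{\Htran}{\bf n}_k/\sqrt{\rho}$ and needs no simplification.

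\emph{Interfering term.} For each $i\in{\cal P}_{t_k}$ with $i\neq k$, write ${\bf h}_i^{\rm N}={\bf U}^{\rm co}{\bf g}_i^{\rm co}+{\bf U}_i^{\rm sp}{\bf g}_i^{\rm sp}$. By the subspace orthogonality between specific components of different UEs, ${\bf U}_k^{\Htran}{\bf U}_i^{\rm sp}\approx{\bf 0}$. This requires that the span of ${\bf U}_k$ is the span of $[{\bf U}^{\rm co},{\bf U}_k^{\rm sp}]$, and that ${\bf U}_i^{\rm sp}$ is orthogonal both to ${\bf U}_k^{\rm sp}$ and to ${\bf U}^{\rm co}$. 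The surviving part is ${\bf U}_k{\bf U}_k^{\Htran}{\bf U}^{\rm co}{\bf g}_i^{\rm co}$, which is consistent with \eqref{eq:received signal at AP MU1}. Because ${\rm span}({\bf U}^{\rm co})\subseteq{\rm span}({\bf U}_k)$, the orthogonal projector ${\bf U}_k{\bf U}_k^{\Htran}$ acts as the identity on ${\bf U}^{\rm co}$, so this part reduces to ${\bf U}^{\rm co}{\bf g}_i^{\rm co}$. Summing over $i$ gives the second line.

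The main obstacle is justifying the span relations, namely ${\rm span}({\bf U}^{\rm co})\subseteq{\rm span}({\bf U}_k)$ and ${\bf U}_i^{\rm sp}\perp{\bf U}_k$. The first follows from the Karhunen--Lo\`eve form \eqref{eq:h3}. There ${\bf R}_k^{\rm N}$ is built from the common and specific responses, and its column space equals that of $[{\bf U}^{\rm co},{\bf U}_k^{\rm sp}]$, assuming each $\beta_{kl}^{\rm N}>0$ so that no direction is dropped. The second is only approximate: it rests on the stated near-orthogonality ${\bf b}^{\Htran}(\cdot){\bf b}(\cdot)\approx0$ for distinct specific scatterers, extended to the common scatterers. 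I would state explicitly that, like the rest of the model, the second equality holds under this orthogonality assumption, with equality in the large-$N$ limit. I would also invoke the same asymptotic argument underlying Theorem~\ref{theo:decomposition}, namely that normalized inner products of near-field responses at distinct locations vanish as $N\to\infty$.
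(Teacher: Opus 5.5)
Your proposal is correct and follows the paper's route: project ${\bf y}_k^{\rm N}$ onto ${\bf U}_k$, take the LS solution $\hat{\bf g}_k={\bf U}_k^{\Htran}{\bf y}_k^{\rm N}/\sqrt{\rho}$, and map back with ${\bf U}_k$. The interference term is then simplified using the orthogonality of ${\bf U}_i^{\rm sp}$ to ${\bf U}_k$ together with ${\rm span}({\bf U}^{\rm co})\subseteq{\rm span}({\bf U}_k)$, exactly as in the derivation of \eqref{eq:received signal at AP MU1}. One small refinement is possible. Reading \eqref{eq:h3} as ${\bf U}_i=[{\bf U}^{\rm co},{\bf U}_i^{\rm sp}]$ with orthonormal columns already gives ${\bf U}_i^{\rm sp}\perp{\bf U}^{\rm co}$, so only ${\bf U}_i^{\rm sp}\perp{\bf U}_k^{\rm sp}$ rests on the stated near-orthogonality assumption, and no extension to the common scatterers is needed.
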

\begin{IEEEproof}
The proof follows a similar approach as in \cite{demir2022channel} and thus is omitted here due to the limited space.
\end{IEEEproof}

Similarly, we extend SA-RSLS to multi-UE scenarios as the MUSA-RSLS estimator.
\begin{coro}\label{coro:MUSARSLS}
A MUSA-RSLS estimator of the NLoS component ${\bf h}_k^{\rm N}$ considering pilot contamination is
\vspace{-0.5em}\begin{equation}\label{eq:musarsls estimate}
{\widehat{\bf h}}^{\rm N}_{{\rm MUSA-RSLS},k} = \frac{{\bf U}_{{\rm s},k} {\bf U}_{{\rm s},k}^{\Htran}}{\sqrt{\rho}}{\bf y}_k^{\rm N}
\vspace{-0.25em}\end{equation}
where ${\bf U}_{{\rm s},k} \in {\mathbb C}^{N\times {\tilde r}_k}$ is the sketch-based subspace matrix specific to UE $k$ and ${\tilde r}_k$ is the sketch size of UE $k$.
Matrix ${\bf U}_{{\rm s},k}$ is obtained using the procedure for ${\bf U}_{\rm s}$ described in Section~\ref{subsec:sa rsls}.
\end{coro}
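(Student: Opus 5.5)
The plan is to treat Corollary~\ref{coro:MUSARSLS} exactly as Lemma~\ref{coro:MURSLS} was obtained, with ${\bf U}_k$ replaced by the sketch-based basis ${\bf U}_{{\rm s},k}$. The argument has three steps: construct ${\bf U}_{{\rm s},k}$ for each UE, form the reduced-space observation and solve the LS problem, then decompose the resulting estimate.

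\emph{Step 1: construct the basis.} For UE $k$, apply the procedure of Section~\ref{subsec:sa rsls} to ${\bf R}_k^{\rm N}$:
\begin{itemize}
\item draw ${\bf \Omega}_k\in{\mathbb C}^{N\times N_{{\rm s},k}}$ with $N_{{\rm s},k}={\tilde r}_k+s$, and form ${\bf S}_k={\bf R}_k^{\rm N}{\bf \Omega}_k$;
\item compute the compact QRD ${\bf S}_k={\bf Q}_{{\rm s},k}{\bf \Delta}_{{\rm s},k}$;
\item compute the compact EVD of ${\bf X}_k={\bf Q}_{{\rm s},k}^{\Htran}{\bf R}_k^{\rm N}{\bf Q}_{{\rm s},k}$, keeping the ${\tilde r}_k$ dominant eigenvectors ${\bf U}_{{\rm x},k}$;
\item set ${\bf U}_{{\rm s},k}={\bf Q}_{{\rm s},k}{\bf U}_{{\rm x},k}$.
\end{itemize}
Because ${\bf Q}_{{\rm s},k}$ and ${\bf U}_{{\rm x},k}$ both have orthonormal columns, ${\bf U}_{{\rm s},k}^{\Htran}{\bf U}_{{\rm s},k}={\bf I}_{{\tilde r}_k}$. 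Hence ${\bf U}_{{\rm s},k}{\bf U}_{{\rm s},k}^{\Htran}$ is an orthogonal projector, and this is the only structural property the LS step needs.

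\emph{Step 2: reduced-space LS.} Project \eqref{eq:received signal at AP MU} to obtain ${\bf y}_k^{\rm N,s}={\bf U}_{{\rm s},k}^{\Htran}{\bf y}_k^{\rm N}$. Model the channel as ${\bf h}_k^{\rm N}\approx{\bf U}_{{\rm s},k}{\bf g}_{{\rm s},k}$ with unknown ${\bf g}_{{\rm s},k}\in{\mathbb C}^{{\tilde r}_k}$, and minimize $\|{\bf y}_k^{\rm N,s}-\sqrt{\rho}{\bf g}_{{\rm s},k}\|^2$. The effective regression matrix is the identity, so the minimizer is $\hat{\bf g}_{{\rm s},k}={\bf U}_{{\rm s},k}^{\Htran}{\bf y}_k^{\rm N}/\sqrt{\rho}$. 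Mapping back gives ${\bf U}_{{\rm s},k}\hat{\bf g}_{{\rm s},k}$, which is exactly \eqref{eq:musarsls estimate}.

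\emph{Step 3: decompose the estimate.} Substitute \eqref{eq:h3} for every $i\in{\cal P}_{t_k}$ into \eqref{eq:musarsls estimate}. The own-UE term is ${\bf U}_{{\rm s},k}{\bf U}_{{\rm s},k}^{\Htran}{\bf U}_k{\bf g}_k$, which is close to ${\bf U}_k{\bf g}_k$. By the subspace orthogonality of specific scatterers, the specific parts of the other UEs are annihilated. The common parts survive as ${\bf U}_{{\rm s},k}{\bf U}_{{\rm s},k}^{\Htran}{\bf U}^{\rm co}{\bf g}_i^{\rm co}$, and the noise term is ${\bf U}_{{\rm s},k}{\bf U}_{{\rm s},k}^{\Htran}{\bf n}_k/\sqrt{\rho}$. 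Together these mirror the structure of Lemma~\ref{coro:MURSLS}.

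\emph{Main obstacle.} The difficult step is justifying that ${\bf U}_{{\rm s},k}{\bf U}_{{\rm s},k}^{\Htran}\approx{\bf U}_k{\bf U}_k^{\Htran}$, i.e., that the sketch actually captures the range of ${\bf R}_k^{\rm N}$. I would handle it with the randomized range-finder argument of \cite{martinsson2020randomized}. If ${\tilde r}_k\ge r_k$, then with probability one ${\bf S}_k$ has the same column space as ${\bf R}_k^{\rm N}$, since a Gaussian ${\bf \Omega}_k$ is full rank almost surely on the rank-$r_k$ range. In that case the projector is exact and the corollary coincides with Lemma~\ref{coro:MURSLS}. If ${\tilde r}_k<r_k$, the oversampling $s$ keeps the projection error bounded by the tail eigenvalues of ${\bf R}_k^{\rm N}$, and the estimator remains well defined as stated, with an additional projection loss.
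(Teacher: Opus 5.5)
Your proposal is correct and follows the paper's route. The paper's proof only says the result follows by running the Section~\ref{subsec:sa rsls} sketch procedure on ${\bf R}_k^{\rm N}$ and applying reduced-space LS, which is exactly your Steps~1--2; your Step~3 and the range-finder discussion go beyond what the corollary asserts. One small correction to that extra material: the projector coincides exactly with ${\bf U}_k{\bf U}_k^{\Htran}$ only when ${\tilde r}_k=r_k$, because for ${\tilde r}_k>r_k$ the additional retained eigenvectors of ${\bf X}_k$ have zero eigenvalues and map to directions orthogonal to the range of ${\bf R}_k^{\rm N}$, so the range of ${\bf U}_{{\rm s},k}$ strictly contains that of ${\bf U}_k$ and retains extra noise, as the paper itself remarks.
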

\begin{IEEEproof}
The proof follows a similar approach as in Section \ref{subsec:sa rsls} but with the subspace for UE $k$.
\end{IEEEproof}

{Since ${\bf U}_{{\rm s},k}$ is constructed from ${\bf R}_k^{\rm N}$, pilot contamination does not alter the sketch basis under the assumed covariance knowledge, but affects the estimate through the pilot-sharing components retained after projection.}

Inspired by CM-RSLS, we propose a MUCM-RSLS scheme that uses the SDM to provide favorable estimation accuracy with low overhead and complexity.
In this case, the SDM is expected to provide not only the prior positions of scatterers $\{{\hat{\bf p}}_l:\forall l\}$ but also the scatterer category $\{{\cal S}^{\rm co},{\cal S}^{\rm sp}_k:\forall k\}$, based on the positions of UEs $\{{\hat{\bf q}}_k:\forall k\}$ and the environment database $\cal E$.
Mathematically, we express this SDM as
\vspace{-0.5em}\begin{equation}
{\cal M}_{\rm SDM}:\
\{{\hat{\bf p}}_l,{\cal S}^{\rm co},{\cal S}^{\rm sp}_k:\forall l,k\} = f_{\rm SDM}(\{{\hat{\bf q}}_k:\forall k\},{\cal E}),
\vspace{-0.25em}\end{equation}
where the UE positions are obtained by the LoS estimator in Section~\ref{sec:LoS estimation} using the configured MPA.
\begin{coro}\label{coro:MUCMRSLS}
A MUCM-RSLS estimator of the NLoS component ${\bf h}_k^{\rm N}$ with $\{{\hat{\bf p}}_l,{\cal S}^{\rm co},{\cal S}^{\rm sp}_k:\forall l,k\}$ provided by the ${\cal M}_{\rm SDM}$ is
\vspace{-0.5em}\begin{equation}\label{eq:mucmrsls estimate}
{\widehat{\bf h}}^{\rm N}_{{\rm MUCM-RSLS},k} = \frac{{\bf Q}_k {\bf Q}_k^{\Htran}}{\sqrt{\rho}}{\bf y}_k^{\rm N}
\vspace{-0.25em}\end{equation}
where ${\bf Q}_k\in {\mathbb C}^{N\times L_k}$ is obtained by performing the compact QRD on ${\hat{\bf U}}_k = [[{\bf b}({\hat\varphi}_l,{\hat\theta}_l,{\hat r}_l)]_{l \in {\cal S}^{\rm co}},[{\bf b}({\hat\varphi}_\ell,{\hat\theta}_\ell,{\hat r}_\ell)]_{l \in {\cal S}_k^{\rm sp}}]\in {\mathbb C}^{N\times L_k}$ with $L_k = |{\cal S}^{\rm co}|+|{\cal S}_k^{\rm sp}|$.

With accurate SDM positions and common/specific classification, the proposed MUCM-RSLS estimate ${\widehat{\bf h}}^{\rm N}_{{\rm MUCM-RSLS},k}$ approaches the MU-RSLS estimate ${\widehat{\bf h}}^{\rm N}_{{\rm MU-RSLS},k}$ with perfect knowledge of ${\bf R}_k^{\rm N}$ in \eqref{eq:mursls estimate} as $N \rightarrow \infty$.
\end{coro}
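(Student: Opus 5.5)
The plan is to reduce Corollary~\ref{coro:MUCMRSLS} to a per-UE application of Theorem~\ref{theo:decomposition}, and then to compare the two estimators term by term through the decomposition in Lemma~\ref{coro:MURSLS}.

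\textbf{Step 1 (subspace identification).} Fix UE $k$. Starting from \eqref{eq:h2}, I write ${\bf h}_k^{\rm N} = {\bf U}^{\prime}_k {\bf g}^{\prime}_k$. Here ${\bf U}^{\prime}_k$ stacks the true responses ${\bf b}(\varphi_l,\theta_l,r_l)$ for $l\in{\cal S}^{\rm co}\cup{\cal S}^{\rm sp}_k$, and ${\bf g}^{\prime}_k$ has independent entries $g_{kl}\sim{\cal N}_{\mathbb C}(0,\beta^{\rm N}_{kl})$. This gives ${\bf R}_k^{\rm N}={\bf U}^{\prime}_k{\bf D}_k({\bf U}^{\prime}_k)^{\Htran}$ with ${\bf D}_k$ diagonal and positive. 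Under the large-array condition the distinct near-field responses become asymptotically orthogonal, since $\frac{1}{N}{\bf b}^{\Htran}{\bf b}'\to0$, so ${\bf U}^{\prime}_k$ has full column rank $L_k$. Hence the column space of ${\bf U}_k$ from the EVD of ${\bf R}_k^{\rm N}$ coincides with that of ${\bf U}^{\prime}_k$, and $r_k=L_k$.

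\textbf{Step 2 (invoke Theorem~\ref{theo:decomposition}).} With accurate SDM positions and correct common/specific classification, $\hat{\bf U}_k$ contains exactly the responses of the scatterers visible to UE $k$, evaluated at the map positions. Theorem~\ref{theo:decomposition} is stated for a generic set of $L$ scatterers, so I apply it with the set ${\cal S}^{\rm co}\cup{\cal S}^{\rm sp}_k$ and $L$ replaced by $L_k$. This yields ${\bf Q}_k{\bf Q}_k^{\Htran}\to{\bf U}_k{\bf U}_k^{\Htran}$ as $N\to\infty$. A misclassification would insert or drop columns and break this step, which is why the corollary requires the classification to be correct.

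\textbf{Step 3 (estimate-level convergence).} I write the difference of the two estimators as
\begin{equation*}
\widehat{\bf h}^{\rm N}_{{\rm MUCM-RSLS},k}-\widehat{\bf h}^{\rm N}_{{\rm MU-RSLS},k}=\frac{{\bf Q}_k{\bf Q}_k^{\Htran}-{\bf U}_k{\bf U}_k^{\Htran}}{\sqrt{\rho}}{\bf y}_k^{\rm N},
\end{equation*}
and expand ${\bf y}_k^{\rm N}$ via \eqref{eq:received signal at AP MU}. The contributions fall into three groups. The signal terms ${\bf h}_i^{\rm N}$ lie in the span of the true responses. For the desired and common parts, the projector difference acting on them vanishes by Step 2. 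For specific components of other UEs, both projectors annihilate them asymptotically by the subspace orthogonality stated after \eqref{eq:h2}. The noise term is handled in second moment: its contribution is $\frac{1}{\rho}{\rm tr}\big(({\bf Q}_k{\bf Q}_k^{\Htran}-{\bf U}_k{\bf U}_k^{\Htran})^2\big)$, which tends to zero because the two projectors have the same finite rank $L_k$ and converge.

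\textbf{Main obstacle.} The difficulty is normalization. The vectors ${\bf h}^{\rm N}$ have norm of order $\sqrt{N}$, so convergence must be stated in a normalized sense, for example $\frac{1}{N}\mathbb{E}\|\cdot\|^2\to0$ or relative to $\mathbb{E}\|{\bf h}_k^{\rm N}\|^2$, consistent with the NMSE metric. It must also be checked that the approximate orthogonality of specific components across UEs holds at the rate needed. I would first bound the cross-Gram entries by $o(N)$, using the stationary-phase decay of $\frac{1}{N}{\bf b}^{\Htran}{\bf b}'$ for distinct $(\varphi,\theta,r)$. I would then conclude that the normalized error, and hence the NMSE gap, vanishes, mirroring the argument for Corollary~\ref{coro:CMLS}.
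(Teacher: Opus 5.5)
Your proposal is correct and follows the paper's route. The paper's proof simply reruns the Appendix~\ref{appe:decomposition} argument on the visible set ${\cal S}^{\rm co}\cup{\cal S}_k^{\rm sp}$ to obtain ${\bf Q}_k{\bf Q}_k^{\Htran}\rightarrow{\bf U}_k{\bf U}_k^{\Htran}$, and then compares the estimators as in \cite{demir2022channel}, which is what your Steps~2--3 do. As a minor remark, your Step~1 already gives ${\rm col}({\bf Q}_k)={\rm col}(\hat{\bf U}_k)={\rm col}({\bf U}^{\prime}_k)={\rm col}({\bf U}_k)$ under an accurate SDM. So once ${\bf U}^{\prime}_k$ has full column rank, the two projectors, and hence the two estimates, coincide exactly, and the rate and normalization issues you flag are not needed for the corollary as stated.
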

\begin{IEEEproof}
The proof follows a similar approach as in Appendix \ref{appe:decomposition} and \cite{demir2022channel} but with the subspace for UE $k$.
\end{IEEEproof}

The proposed MUCM-RSLS estimator has the same overhead and complexity order as CM-RSLS, which are substantially lower than those of MU-RSLS in multi-UE UM-MIMO scenarios.
{Errors in ${\hat{\bf q}}_k$ or sudden blockages can alter the predicted visibility, while scatterer location errors perturb the resulting ${\bf Q}_k$.
When a visibility change makes the common/specific classification outdated, the affected MUCM-RSLS subspace and the structural similarities used for pilot assignment should be refreshed before reuse.}

\vspace{-1em}
\subsection{NMSE and Structural Similarity-based Pilot Assignment}

According to Corollary \ref{coro:MUCMRSLS}, the NMSE of the MUCM-RSLS estimator for UE $k$ approaches that of the MU-RSLS estimator, which is derived as
\vspace{-0.5em}\begin{equation}\label{eq:nmse mursls}
\begin{aligned}
{\sf NMSE}_{{\rm MU-RSLS},k} &= \frac{{\mathbb E}\{\| {{\bf h}}^{\rm N}_{k} -  {\widehat{\bf h}}^{\rm N}_{{\rm MU-RSLS},k}\|^2\}}{{\mathbb E}\{\| {{\bf h}}^{\rm N}_{k}\|^2\}}\\
&= \frac{\sum\nolimits_{i\in{\cal P}_{t_k},i\ne k} {\rm tr}({\bf \Lambda}_i^{\rm co})+\frac{r_k}{\rho}}{{\rm tr}({\bf \Lambda}_k)}
\end{aligned}
\vspace{-0.25em}\end{equation}
after some algebraic manipulations.
We can see from \eqref{eq:nmse mursls} that the NMSE is primarily determined by the power of common subspace components from the pilot-sharing UEs and
\vspace{-0.5em}\begin{equation}
\frac{r_k}{\rho{\rm tr}({\bf \Lambda}_k)} {\le} {\sf NMSE}_{{\rm MU-RSLS},k} {\le}  \frac{\sum\nolimits_{i\in{\cal P}_{t_k},i\ne k} {\rm tr}({\bf \Lambda}_i)+\frac{r_k}{\rho}}{{\rm tr}({\bf \Lambda}_k)}.
\vspace{-0.25em}\end{equation}
The equality in the first inequality holds when there is no pilot contamination, and the equality in the second inequality occurs in the case of no specific scatterers.
This implies that the subspace overlap of the pilot-sharing UEs should be as low as possible to suppress pilot contamination such that ${\sf NMSE}_{{\rm MU-RSLS},k} \rightarrow  \frac{r_k}{\rho{\rm tr}({\bf \Lambda}_k)}$.
To characterize the degree of subspace overlap between the UEs, we define a metric referred to as {\it structural similarity} as
\vspace{-0.5em}\begin{equation}\label{eq:similarity}
\rho_{ki} = \frac{|{\rm tr}({\bf \Lambda}_k^{\rm co}{\bf \Lambda}_i^{\rm co})|}{\sqrt{{\rm tr}({\bf \Lambda}_k)+{\rm tr}({\bf \Lambda}_i)}},\ \forall k,i.
\vspace{-0.25em}\end{equation}
A smaller $\rho_{ki}$ indicates weaker potential pilot contamination between UE $k$ and UE $i$.

Based on the structural similarities defined in \eqref{eq:similarity}, we propose a pilot assignment scheme that minimizes the total intra-group sum of structural similarities, i.e., $\sum\nolimits_{t=1}^{\tau_{\rm p}}\sum\nolimits_{k,i\in{\cal P}_{t},k\ne i} \rho_{ki}$.
The core idea is to partition the $K$ UEs into $\tau_p$ groups using K-means clustering and then assign one orthogonal pilot to each group.
The detailed procedure for K-means clustering is similar to that in \cite{chen2021structured} (and references therein) and is omitted here due to limited space.

\renewcommand\arraystretch{1.5}
\begin{table}[t!]
  \centering
  \fontsize{9}{9}\selectfont
  \caption{System Parameters. }\vspace{0em}
  \label{tab:paremeter}
    \begin{tabular}{|p{1.6cm}<{\centering}| p{4.5cm}<{\centering}| p{1.4cm}<{\centering}|}
    \hline
    \bf Parameters  & \bf Definitions  & \bf Values \cr\hline
     $L$, $N$ & Number of scatterers and antennas& $10$, $256$\cr\hline
     $f$ & Carrier frequency & 28 GHz\cr\hline
     $d$ & Minimum antenna distance& $\lambda/2$\cr\hline
     $\tau_{\rm c}$, $\tau_{\rm p}$ & Number of available channel uses and pilot sequences & 200, 5\cr\hline
     $\rho$, $\kappa$ & Pilot SNR and Rician K-factor &10 dB, 10 \cr\hline
     ${\varphi}_{\min}$, ${\varphi}_{\max}$ & Azimuth angular range & $-30^\circ$, $30^\circ$\cr\hline
     ${\theta}_{\min}$, ${\theta}_{\max}$ & Elevation angular range & $-20^\circ$, $0^\circ$\cr\hline
    \end{tabular}
\vspace{-2.em}
\end{table}

\vspace{-0.5em}
\section{Results and Discussion}\label{sec:results}

In this section, we will evaluate our proposed channel map-based channel estimation schemes in Section \ref{sec:LoS estimation}, Section \ref{sec:NLoS estimation}, and Section \ref{sec:NLoS estimation MU}.
Unless otherwise specified, the adopted system parameters are given in Table \ref{tab:paremeter}.
The simulations are performed in MATLAB R2022b on a computer with Intel Core i7-9700 CPU @ 3.00 GHz and 16 GB of RAM.

For estimating the LoS component, we propose two channel map-based schemes, namely ``CM-MUSIC" and ``CM-ML".
More precisely, CM-MUSIC employs the MUSIC algorithm~\cite{schmidt1986multiple} and CM-ML performs a discrete grid search to maximize the log-likelihood in \eqref{eq:likelihood} globally.
Both CM-MUSIC and CM-ML use antenna positions determined by the proposed channel map-based PGA method.
For the benchmarks ``MUSIC" and ``ML" without channel maps, their antennas remain in the UPA configuration.
{Accordingly, the reported LoS performance gains represent the combined benefit of the UDM-provided coarse position and the MPA geometry configured using this information.
Moreover, the curve labeled ``FILB" denotes the local FILB derived from \eqref{e10} under the adopted model with fixed covariance.}

For estimating the NLoS component, our proposed schemes are marked as ``SA-RSLS" and ``CM-RSLS".
The former employs the randomized sketching algorithm for significant complexity reduction while maintaining almost the same estimation accuracy, and the latter provides further reduction with the help of the SDM.
Five benchmarks are considered, i.e., ``GA-RSLS", ``NF-RSLS", ``FF-RSLS", ``MMSE", and ``LS".
The first three employ the RSLS estimator using the genie-aided perfect ${\bf R}^{\rm N}$, the near-field approximation ${\bar{\bf R}}^{\rm N}$~\cite{demir2024spatial}, and a far-field approximation that ignores the distance variables~\cite{demir2022channel}, respectively.
The last two perform the classical MMSE and LS estimations, respectively.
When considering pilot contamination in multi-UE scenarios, we propose the ``MUCM-RSLS" and ``MUSA-RSLS" estimation schemes and a structural similarity-based pilot assignment, namely ``SS-CM".
Two pilot assignment benchmarks are considered.
The first exploits the same K-means clustering but utilizes the global similarities $\{{\hat\rho}_{ki} = {|{\rm tr}({\bf \Lambda}_k{\bf \Lambda}_i)|}/{\sqrt{{\rm tr}({\bf \Lambda}_k)+{\rm tr}({\bf \Lambda}_i)}}:\forall k,i \}$ in the objective, namely ``GS-CM".
The second ``Random" performs a pilot assignment randomly.

\begin{figure}[t!]
\centering
\includegraphics[width=\columnwidth]{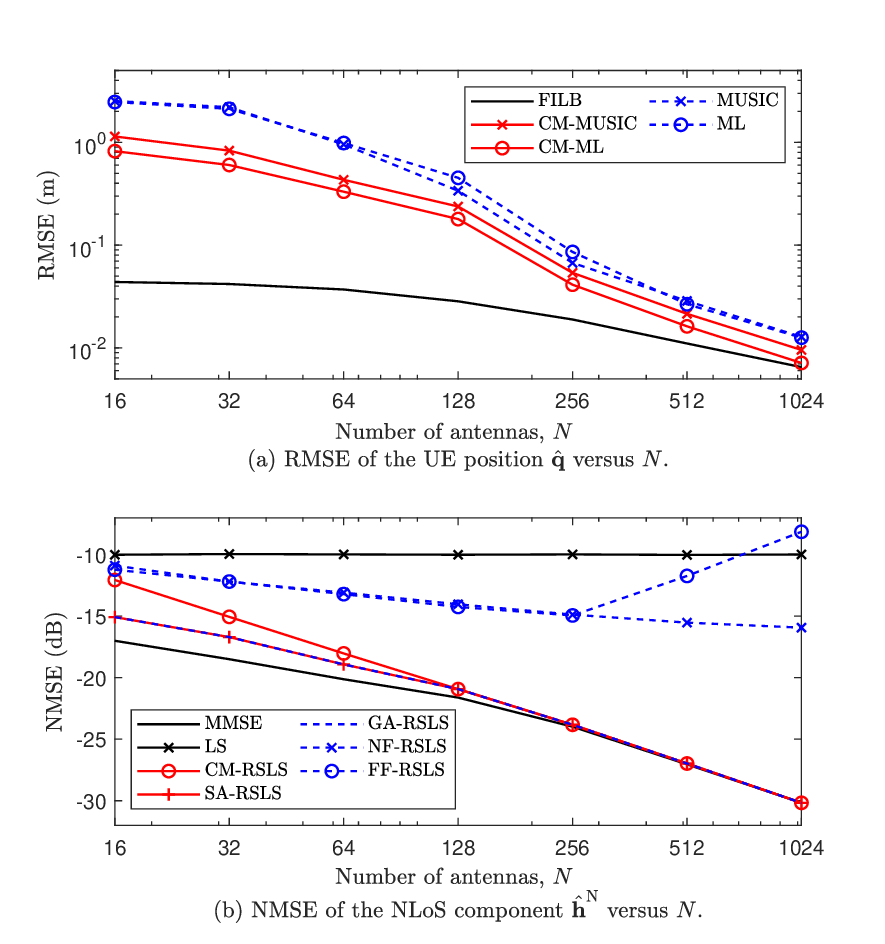}
\vspace{-2.em}
\caption{{(a) RMSE of the UE position ${\hat{\bf q}}$ and (b) NMSE of the NLoS component ${\hat{\bf h}}^{\rm N}$ versus the number of antennas $N$.}
\label{fig:3_NMSE_N}}
\vspace{-1.5em}
\end{figure}

\begin{figure}[t!]
\centering
\includegraphics[width=\columnwidth]{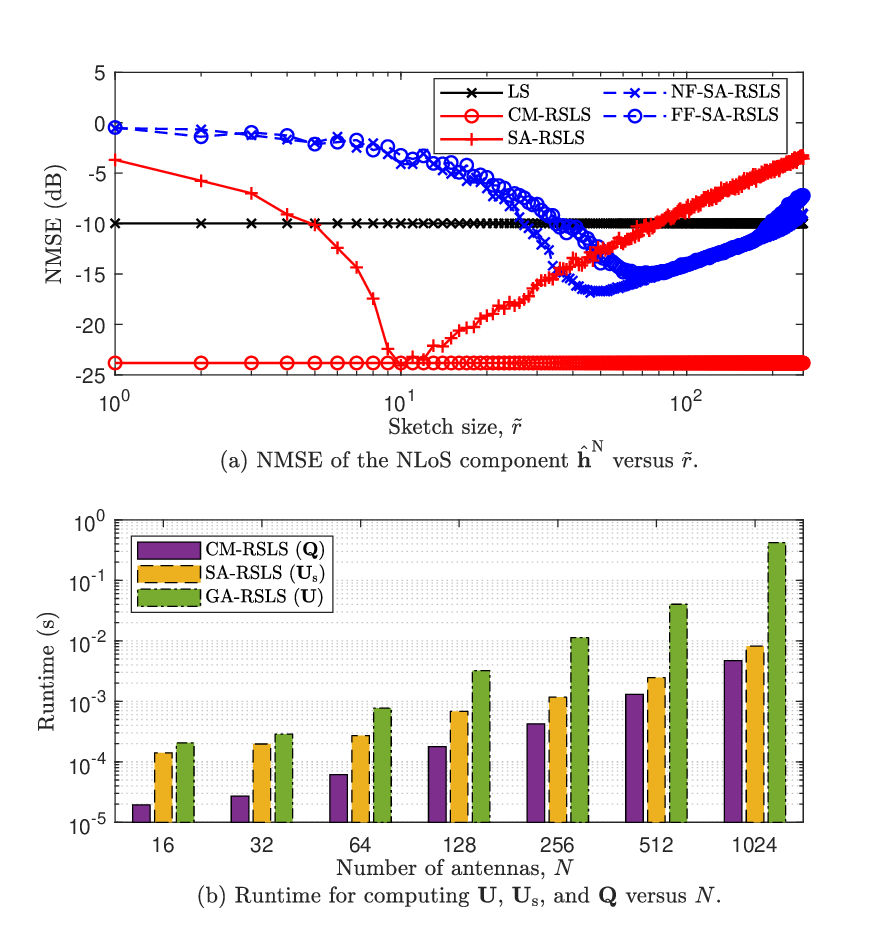}
\vspace{-2.em}
\caption{(a) NMSE of the NLoS component ${\hat{\bf h}}^{\rm N}$ versus the sketch size $\tilde r$ and (b) runtime for computing ${\bf U}$, ${{\bf U}_{\rm s}}$, and ${\bf Q}$ versus the number of antennas $N$.
\label{fig:5_NMSE_time_sketch}}
\vspace{-2.5em}
\end{figure}

We first quantify the UE localization accuracy versus antenna number $N$ in Fig.~\ref{fig:3_NMSE_N} (a), where the UDM provides a 10\%-error coarse UE position and the NLoS component is treated as noise.
As shown in Fig.~\ref{fig:3_NMSE_N} (a), the proposed CM-MUSIC and CM-ML outperform all benchmarks in RMSE across all $N$ with RMSE decreasing steadily and {approaching the local FILB under the adopted simulation model.
This advantage reflects the joint use of the UDM-provided coarse UE position and the MPA geometry configured through PGA, which together improve the FIM conditioning of LoS localization.}
To highlight the proposed NLoS estimation schemes, we assume that the UE position is known and that the SDM provides error-free scatterer positions in Fig.~\ref{fig:3_NMSE_N} (b).
We can observe that all schemes exhibit a decreasing NMSE with increasing $N$, while the proposed CM-RSLS and SA-RSLS outperform NF-RSLS, FF-RSLS, and LS by leveraging the scattering structure and low-rank property of NLoS channels.
SA-RSLS consistently matches GA-RSLS closely, while CM-RSLS converges to GA-RSLS as $N$ grows, as expected in Corollary \ref{coro:CMLS}.
For large $N$, all three approaches converge toward the MMSE benchmark.
FF-RSLS deteriorates for $N \geq 256$ because the enlarged aperture invalidates its far-field plane-wave model.

\begin{figure}[t!]
\centering
\includegraphics[width=\columnwidth]{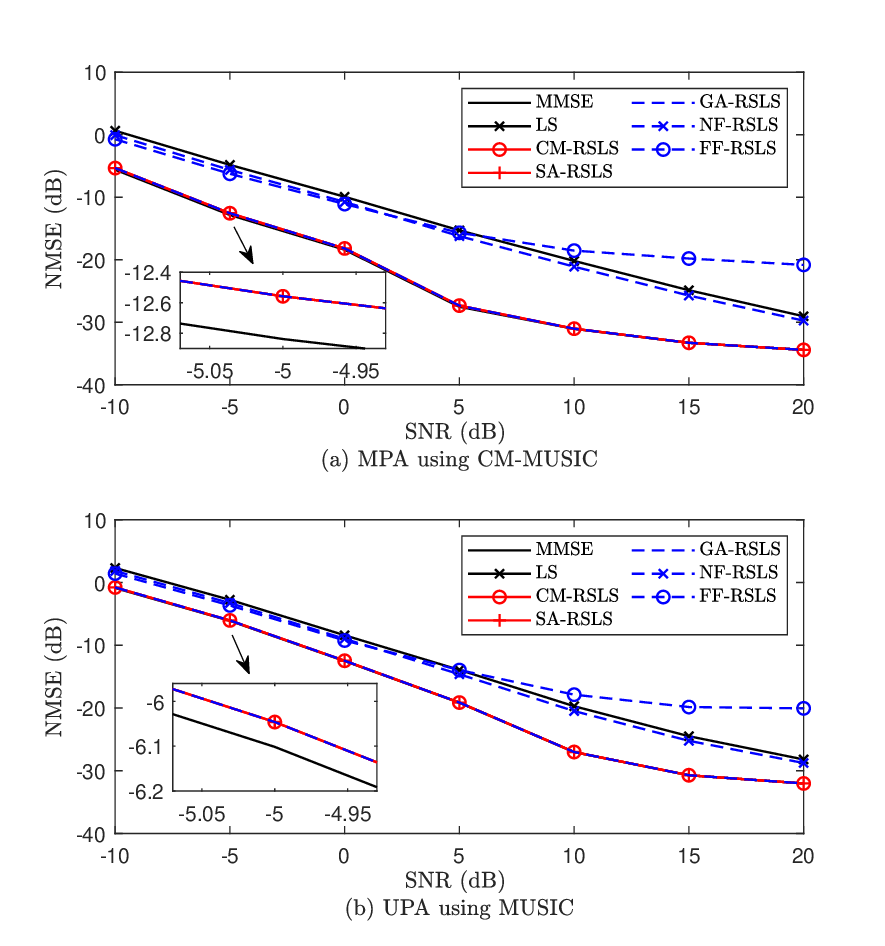}
\vspace{-2.em}
\caption{NMSE of the full channel ${\hat{\bf h}}$ versus the SNR $\rho$ employing (a) a MPA with CM-MUSIC and (b) a UPA with MUSIC.
\label{fig:4_NMSE_SNR}}
\vspace{-2em}
\end{figure}

{Fig.~\ref{fig:5_NMSE_time_sketch} (a) examines the NMSE versus sketch size, where ``NF-SA-RSLS" and ``FF-SA-RSLS" apply sketching to NF-RSLS and FF-RSLS, respectively.
The SA-RSLS NMSE decreases as ${\tilde r}$ approaches the effective channel rank and closely matches GA-RSLS around this rank.
A smaller ${\tilde r}$ causes projection loss, whereas a larger one retains additional noise directions.
CM-RSLS achieves the lowest NMSE, while NF-SA-RSLS and FF-SA-RSLS are affected by mismatch in their underlying approximate subspaces.}
Fig.~\ref{fig:5_NMSE_time_sketch} (b) compares the runtime for subspace matrix extraction under different antenna array scales. GA-RSLS has the longest runtime due to full-dimensional EVD. SA-RSLS shortens the runtime via compact QRD with low-dimensional EVD, while CM-RSLS is the most efficient, as the SDM allows it to determine the subspace basis without EVD.
This aligns with the complexity analysis in Section~\ref{sec:NLoS estimation}.
Specifically, compared to GA-RSLS, our proposed SA-RSLS and CM-RSLS reduce the runtime by two and three orders of magnitude, respectively, at $N=1024$.
\begin{figure}[t!]
\centering
\includegraphics[width=\columnwidth]{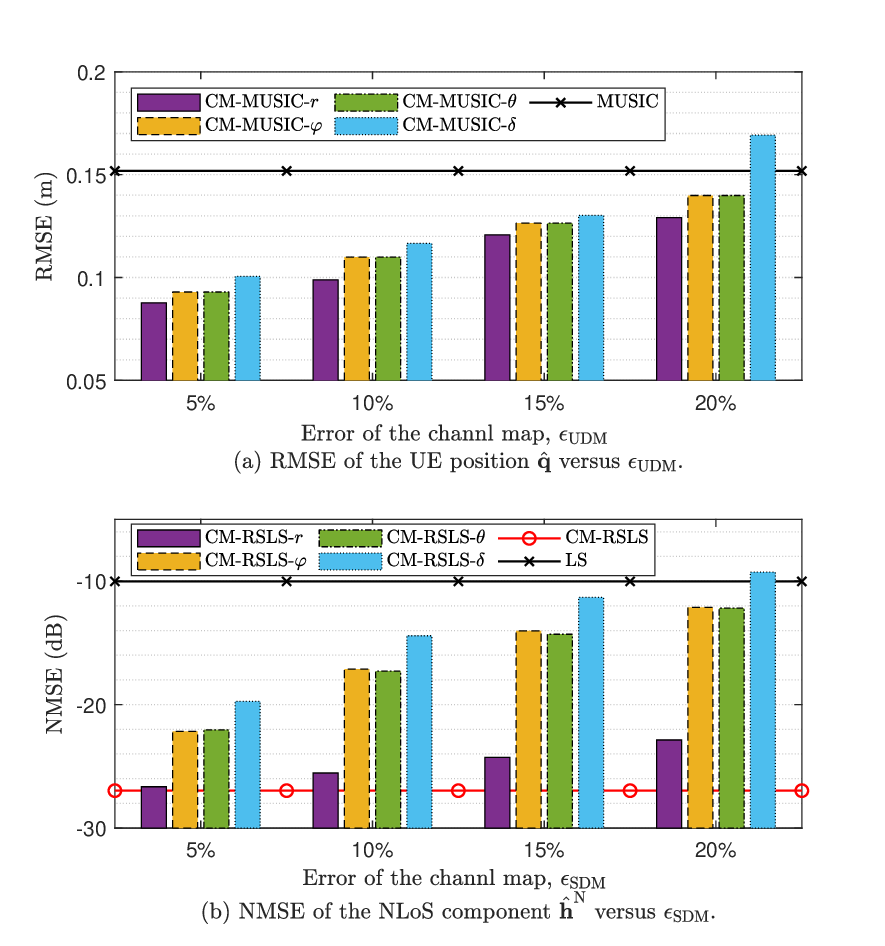}
\vspace{-2.em}
\caption{{(a) RMSE of the UE position ${\hat{\bf q}}$ versus channel map error $\epsilon_{\rm UDM}$ and (b) NMSE of the NLoS component ${\hat{\bf h}}^{\rm N}$ versus channel map error $\epsilon_{\rm SDM}$.}
\label{fig:6_NMSE_N_error}}
\vspace{-2em}
\end{figure}

In Fig.~\ref{fig:4_NMSE_SNR}, we evaluate our proposed LoS and NLoS estimation schemes versus SNR for the full channel.
Fig.~\ref{fig:4_NMSE_SNR} (a) employs an MPA with CM-MUSIC for LoS estimation, while Fig.~\ref{fig:4_NMSE_SNR} (b) uses a UPA with MUSIC.
As observed, all schemes exhibit decreasing NMSE with increasing SNR due to the reduced noise.
At high SNR, the NMSE decrease becomes gradual as the LoS localization error approaches its floor, limiting further improvement in full-channel estimation.
{Consistent with Fig.~\ref{fig:3_NMSE_N} (b), SA-RSLS and CM-RSLS outperform NF-RSLS, FF-RSLS, and LS while matching GA-RSLS closely.
In particular, SA-RSLS remains close to GA-RSLS from $-10$ to $20$ dB, showing little sketch-specific loss in the tested low-SNR region.}
Comparing Fig.~\ref{fig:4_NMSE_SNR} (a) and Fig.~\ref{fig:4_NMSE_SNR} (b), all schemes in (a) achieve lower NMSE than their counterparts in (b), consistent with the more accurate LoS estimation of CM-MUSIC observed in Fig.~\ref{fig:3_NMSE_N} (a).
{Under the considered setup, these results illustrate the benefit of the proposed joint design, in which the UDM-provided coarse position guides the MPA configuration for LoS estimation.
Thus, the full channel results include the propagation of LoS estimation errors through the single-UE sequential processing.}

Fig.~\ref{fig:6_NMSE_N_error} analyzes the impact of channel map errors on the RMSE of the proposed CM-MUSIC and the NMSE of the proposed CM-RSLS.
Suffixes $-\delta$, $-\varphi$, $-\theta$, $-r$ denote the channel errors in overall position, azimuth, elevation, and distance, respectively.
The estimation errors of both channel map-based schemes increase with channel map errors.
Notably, distance errors ($-r$) have the smallest impact; azimuth ($-\varphi$) and elevation ($-\theta$) errors exert nearly equal but milder effects than overall position errors ($-\delta$).
Even at 10\% overall channel map error, CM-MUSIC maintains a 22.2\% RMSE reduction over MUSIC, and CM-RSLS maintains a 4.5 dB NMSE edge over LS. This confirms the robustness of the proposed channel map-based schemes to moderate map inaccuracies.
{Together with Fig.~\ref{fig:4_NMSE_SNR}, these results separately characterize the effects of LoS estimation errors and SDM coordinate perturbations on the downstream estimates under the considered settings.}

\begin{figure}[t!]
\centering
\includegraphics[width=\columnwidth]{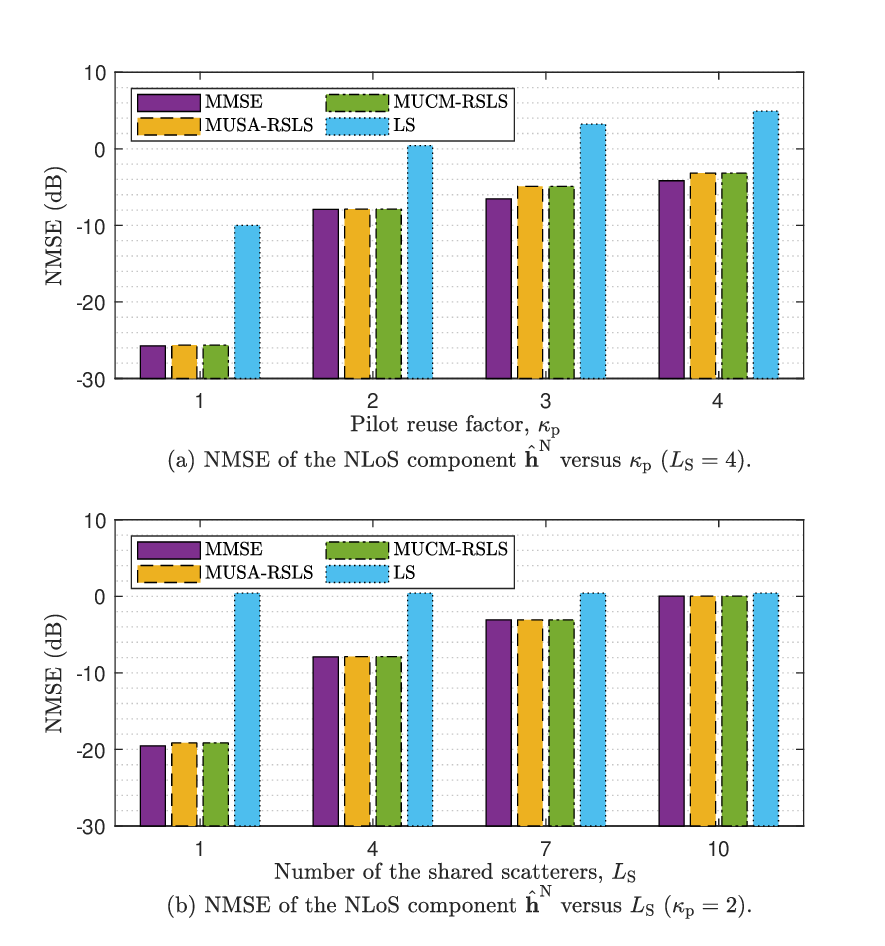}
\vspace{-2.em}
\caption{NMSE of the NLoS component ${\hat{\bf h}}^{\rm N}$ versus (a) the pilot reuse factor $\kappa_{\rm p}$ and (b) the number of shared scatterers $L_{\rm S}$ in multi-UE scenarios.
\label{fig:7_NMSE_MU}}
\vspace{-2em}
\end{figure}

In Fig.~\ref{fig:7_NMSE_MU}, we evaluate the NMSE of the proposed MUCM-RSLS and MUSA-RSLS in multi-UE scenarios with pilot reuse.
Fig.~\ref{fig:7_NMSE_MU} (a) demonstrates the impact of the pilot reuse factor $\kappa_{\rm p}$ on NMSE, with the number of shared scatterers $L_{\rm S} = 4$.
As shown, the NMSE of all schemes increases with $\kappa_{\rm p}$, indicating that stronger pilot reuse causes more severe pilot contamination and degrades estimation accuracy.
{MUCM-RSLS and MUSA-RSLS perform nearly identically as $\kappa_{\rm p}$ increases, indicating that the degradation is governed mainly by pilot reuse and common subspace overlap rather than randomized subspace acquisition.}
Fig.~\ref{fig:7_NMSE_MU} (b) examines the effect of the number of shared scatterers $L_{\rm S}$ on NMSE, with $\kappa_{\rm p} = 2$.
For all schemes except LS, NMSE increases with $L_{\rm S}$ since more shared scatterers expand the common subspace, amplifying pilot contamination from pilot-sharing UEs.
At $\kappa_{\rm p}=2$ and $N=256$, MUCM-RSLS and MUSA-RSLS remain close to MMSE as $L_{\rm S}$ increases, confirming effective pilot-contamination mitigation.

Finally, we evaluate the uplink spectral efficiency (SE) of the proposed SS-CM pilot allocation scheme in multi-UE scenarios with pilot contamination in Fig.~\ref{fig:8_SE_MU}.
We employ the hardening bound in \cite{demir2022cell} to compute SE, using $(\tau_{\rm c} - \tau_{\rm p} )$ channel uses for uplink data reception and maximum ratio combining to highlight the impact of channel estimation on decoding.
As shown in both Fig.~\ref{fig:8_SE_MU} (a) with $\kappa_{\rm p}=2$ and Fig.~\ref{fig:8_SE_MU} (b) with $\kappa_{\rm p}=4$, the SS-CM pilot assignment scheme that leverages the structural similarities outperforms GS-CM that leverages the global similarities.
This is because global similarities fail to accurately quantify subspace overlap between UEs, leading to suboptimal pilot grouping.
Both SS-CM and GS-CM far exceed the baseline Random.
Notably, UEs employing MUCM-RSLS channel estimates for combining achieve much higher SE than those using LS estimates, as MUCM-RSLS provides more accurate channel information, thereby enhancing interference suppression during data decoding.
Comparing Fig.~\ref{fig:8_SE_MU} (a) and Fig.~\ref{fig:8_SE_MU} (b), the SE at $\kappa_{\rm p}=2$ is over twice that at $\kappa_{\rm p}=4$ since lower pilot reuse reduces contamination and improves channel estimation accuracy.

\begin{figure}[t!]
\centering
\includegraphics[width=\columnwidth]{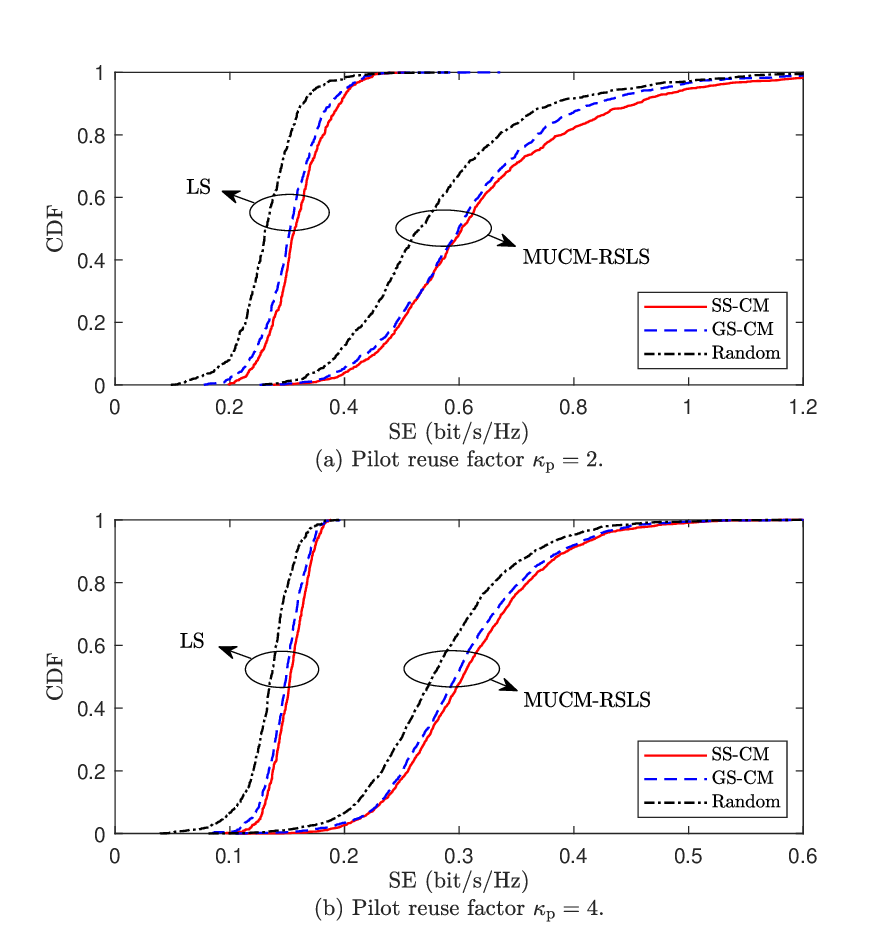}
\vspace{-2.em}
\caption{CDFs of SE per UE with pilot reuse factors (a) $\kappa_{\rm p} = 2$  and (b) $\kappa_{\rm p} = 4$ in multi-UE scenarios.
\label{fig:8_SE_MU}}
\vspace{-2em}
\end{figure}
\section{Conclusions}\label{sec:conclusion}

This paper has proposed a unified channel map-based near-field channel estimation framework for UM-MIMO systems with MPAs.
By integrating environment-aware channel maps with the geometric advantages of antenna mobility and the characteristics of near-field propagation, the framework has provided a comprehensive and coherent set of approaches for estimating LoS and NLoS components and for addressing multi-UE pilot contamination.
We have proposed a channel map-based LoS estimation method that combines UE position priors with Fisher information-guided antenna placement, enabling accurate LoS parameter acquisition under spherical-wave conditions.
Efficient NLoS estimation methods have also been proposed, leveraging low-rank subspace structure and scatterer-location information through the SA-RSLS and CM-RSLS estimators.
Moreover, we have extended the NLoS estimation framework to multi-UE scenarios by modeling scatterer visibility regions and designing the SS-CM pilot assignment strategy, which together enhance estimation robustness and improve SE.
Our simulation results have demonstrated consistent and significant gains in estimation accuracy, pilot contamination mitigation, and computational efficiency compared with conventional near-field and far-field benchmarks.
Overall, this work has shown that channel map-based technology provides a scalable and environment-adaptive foundation for high-accuracy channel acquisition in near-field UM-MIMO, highlighting its strong potential as an enabling technology for upcoming 6G wireless communications.

\begin{appendices}
\section{Proof of Lemma \ref{lemm:D_optimality}}\label{appe:D_optimality}

Let $\xi  = {{2{{\left| \alpha \right|}^2} \chi^2 }}/{{{\sigma ^2}}}$, and define $\mathbf{J} = \xi \sum_{n=1}^N \mathbf{J}_n$, where $\mathbf{J}_n = {{\left( {{\tilde{\bf q}} - {\mathbf{a}_n}} \right){{\left( {{\tilde{\bf q}} - {\mathbf{a}_n}} \right)}^\top}}}/{ {\left\| {{\tilde{\bf q}} - {\mathbf{a}_n}} \right\|^2}}$.
By linearity, differentiation with respect to the $n$-th antenna position affects only the $n$-th term, yielding
\vspace{-0.5em}\begin{equation}
\frac{\partial \mathbf{J}}{\partial [\mathbf{a}_{n}]_j} = \xi \frac{\partial \mathbf{J}_n}{\partial [\mathbf{a}_{n}]_j}.
\vspace{-0.25em}\end{equation}
Let $\mathbf{v}_n = {\tilde{\bf q}} - \mathbf{a}_n$, so that $\mathbf{J}_n = {\mathbf{v}_n \mathbf{v}_n^\top}/{\mathbf{v}_n^\top \mathbf{v}_n}$.
	Let $\mathbf{A} = \mathbf{v}_n \mathbf{v}_n^\top$ and $b = \mathbf{v}_n^\top \mathbf{v}_n = \left\| {{\mathbf{v}_n}} \right\|^2$.
	With ${\hat{\bf e}}_j$ being the standard basis vector for the $j$-th coordinate, we get
	\vspace{-0.5em}\begin{equation}
		\frac{\partial \mathbf{v}_n}{\partial [\mathbf{a}_{n}]_j} = -{\hat{\bf e}}_j
	\vspace{-0.25em}\end{equation}
We now compute the derivatives of $\mathbf{A}$ and $b$, as
	\vspace{-0.5em}\begin{equation}
		\frac{\partial b}{\partial [\mathbf{a}_{n}]_j} = \frac{\partial (\mathbf{v}_n^\top \mathbf{v}_n)}{\partial [\mathbf{a}_{n}]_j} = 2 \mathbf{v}_n^\top \frac{\partial \mathbf{v}_n}{\partial [\mathbf{a}_{n}]_j} = 2 \mathbf{v}_n^\top (-{\hat{\bf e}}_j) = -2 [\mathbf{v}_{n}]_j
	\vspace{-0.25em}\end{equation}
	\vspace{-0.5em}\begin{equation}
	\begin{split}
		& \frac{\partial \mathbf{A}}{\partial [\mathbf{a}_{n}]_j} = \frac{\partial (\mathbf{v}_n \mathbf{v}_n^\top)}{\partial [\mathbf{a}_{n}]_j} = \left(\frac{\partial \mathbf{v}_n}{\partial [\mathbf{a}_{n}]_j}\right) \mathbf{v}_n^\top + \mathbf{v}_n \left(\frac{\partial \mathbf{v}_n}{\partial [\mathbf{a}_{n}]_j}\right)^\top \\
		& = -{\hat{\bf e}}_j \mathbf{v}_n^\top - \mathbf{v}_n {\hat{\bf e}}_j^\top.
	\end{split}
	\vspace{-0.25em}\end{equation}
	Applying the quotient rule, we have
	\vspace{-0.5em}\begin{equation}
		\frac{\partial \mathbf{J}_n}{\partial [\mathbf{a}_{n}]_j} = \frac{ (-{\hat{\bf e}}_j \mathbf{v}_n^\top - \mathbf{v}_n {\hat{\bf e}}_j^\top) \left\| {{\mathbf{v}_n}} \right\|^2 - (\mathbf{v}_n \mathbf{v}_n^\top)(-2[\mathbf{v}_{n}]_j) }{ (\left\| {{\mathbf{v}_n}} \right\|^2)^2 }.
	\vspace{-0.25em}\end{equation}
Substituting $\mathbf{v}_n = {\tilde{\bf q}} - \mathbf{a}_n$ back in, we get the complete expression for the FIM derivative in \eqref{e15}.

\section{Proof of Theorem \ref{theo:decomposition}}\label{appe:decomposition}

Under the accurate-SDM condition in Theorem~\ref{theo:decomposition}, ${\hat{\bf U}}={\bf U}^{\prime}={\bf Q}{\bf \Delta}$.
According to the array response vector definition in \eqref{eq:b}, for two distinct scatterers $l$ and $j$, the inner product of their array response vectors is given by
\vspace{-0.5em}\begin{equation}
{\bf b}^{\Htran}(\varphi_l,\theta_l,r_l){\bf b}(\varphi_j,\theta_j,r_j) = \sum_{n=1}^N \!e^{\jmath \phi_{n,lj}}
\vspace{-0.25em}\end{equation}
which is a sum of complex exponentials with phases
\vspace{-0.5em}\begin{equation}
\phi_{n,lj} = \chi\left[(d_{ln}-r_l)-(d_{jn}-r_j)\right],
\vspace{-0.25em}\end{equation}
where $d_{ln}=\|{\bf p}_l-{\bf a}_n\|$ and $d_{jn}=\|{\bf p}_j-{\bf a}_n\|$. These phases vary rapidly across the array and become dense and well distributed over $[0,2\pi]$ as $N$ increases.
The law of large numbers then implies that \cite[Lem. B.12]{bjornson2017massive}
\vspace{-0.5em}\begin{equation}
\lim_{N\rightarrow \infty} \frac{1}{N}\sum_{n=1}^N \!e^{\jmath \phi_{n,lj}}\!=\! \lim_{N\rightarrow \infty}  \frac{{\bf b}^{\Htran}(\varphi_l,\theta_l,r_l){\bf b}(\varphi_j,\theta_j,r_j)}{\|{\bf b}(\varphi_l,\theta_l,r_l)\| \|{\bf b}(\varphi_j,\theta_j,r_j)\|} \!=\! 0
\vspace{-0.25em}\end{equation}
Since $\|{\bf b}(\varphi_l,\theta_l,r_l)\| = \sqrt{N}$, $l=1,\ldots,L$, this indicates that the array response vectors become asymptotically orthogonal as $N \rightarrow \infty$.
In this case, both ${\bf U}^{\prime{\Htran}}{\bf U}^{\prime}$ and ${\bf \Delta}$ in the compact QR decomposition become diagonal matrices.
According to the definition of ${\bf h}^{\rm N}$ in \eqref{eq:h1}, we have
\vspace{-0.5em}\begin{equation}
\begin{aligned}
{\bf R}^{\rm N} &= \sum_{l=1}^L \beta^{\rm N}_l {\bf b}(\varphi_l,\theta_l,r_l){\bf b}^{\Htran}(\varphi_l,\theta_l,r_l)\\
&= {\bf U}^{\prime}\diag(\beta^{\rm N}_1,\ldots,\beta^{\rm N}_L){\bf U}^{\prime{\Htran}}\\
&= {\bf Q}{\bf \Delta}\diag(\beta^{\rm N}_1,\ldots,\beta^{\rm N}_L){\bf \Delta}^{\Htran}{\bf Q}^{\Htran}
\end{aligned}
\vspace{-0.25em}\end{equation}
where ${\bf \Delta}\diag(\beta^{\rm N}_1,\ldots,\beta^{\rm N}_L){\bf \Delta}^{\Htran} \rightarrow {\bf \Lambda}$ is a diagonal matrix and ${\bf Q} {\bf Q}^{\Htran} \rightarrow {\bf U} {\bf U}^{\Htran}$ when $N \rightarrow \infty$.
This completes the proof.
\end{appendices}



\end{document}